\newtheorem{theorem}{Theorem}
\newtheorem{lemma}{Lemma}
\newtheorem{proof}{Proof}
\newtheorem{corollary}{Corollary}
\title{Epidemic Plateau: A Phenomenon under Adaptive Prevention Strategies}
\author{
 Hao~Liao \\
  Shenzhen University\\
  Shenzhen, China\\
  \texttt{jamesliao520@gmail.com} \\
  \And
  Ziqiang~Wu \\
  Shenzhen University\\
  Shenzhen, China\\
  \texttt{wzqgddg@foxmail.com} \\
   \And
 Alexandre~Vidmer \\
  Shenzhen University\\
  Shenzhen, China\\
  \texttt{alexandre@vidmer.com} \\
  \And
 Mingyang~Zhou \\
  Shenzhen University\\
  Shenzhen, China\\
  \texttt{zhoumy2010@gmail.com} \\
  \And
 Wei~Chen \\
  Microsoft Research Asia\\
  Beijing, China\\
  \texttt{weic@microsoft.com} \\
}
\begin{document}
\maketitle

\begin{abstract}
Since the beginning of the COVID-19 spreading, the number of studies on the epidemic models increased dramatically. It is important for policymakers to know how the disease will spread and what are the effects of the policies and environment on the spreading. In this paper, we propose two extensions to the standard SIR model: (a) we consider the prevention measures adopted based on the current severity of the infection. Those measures are adaptive and change over time; (b) multiple cities and regions are considered, with population movements between those cities and regions, while taking into account that each region may have different prevention measures. Although the adaptive measures and mobility of the population were often observed during the pandemic, these effects are rarely explicitly modeled and studied in the classical epidemic models. The model we propose gives rise to a plateau phenomenon: the number of people infected by the disease stays at the same level during an extended period of time. We show what are conditions need to be met in order for the spreading to exhibit a plateau period in a single city. In addition, this phenomenon is interdependent when considering multiple cities. We verify from the real-world data that the plateau phenomenon does exist in many regions of the world in the current COVID-19 development. Finally, we provide theoretical analysis on the plateau phenomenon for the single-city model and derive a series of results on the emergence and the ending of the plateau, as well as on the height and length of the plateau. Our theoretical results match well with our experimental findings.
\end{abstract}

\keywords{Epidemiology, COVID-19, Human mobility, Prevention strategy}

\section{Introduction}

\label{Introduction}

Since December 2019, a virus named SARS-CoV-2 has been affecting the entire world drastically until this day. 
As of November 1, 2021, the virus had infected 246,594,191 people, including UK Prime Minister Boris Johnson and US President Donald Trump. 
Although the epidemic in some countries has been brought under control, the epidemic in most parts of the world is still ongoing. 
Nowadays (November 2021), the vaccination rate of COVID-19 vaccines is increasing, but there are still hundreds of thousands of new cases per day.

The number of studies based on the epidemic analysis has considerably grown since the start of the COVID-19 epidemic.
Some are about the characteristics of COVID-19. Although most of the attributes of SARS-CoV-2 cannot be obtained by epidemic analysis, the epidemiological characteristics of COVID-19 can be estimated. Most of the works~\cite{wu2020nowcasting, hao2020reconstruction, cao2020incorporating, deressa2020modeling} focus on the study of the basic reproduction number $R_0$ of COVID-19, which can reflect the propagation ability of SARS-CoV-2, while other works analyze the characteristics of COVID-19 such as the generation interval (cases doubling time)~\cite{guan2020clinical, ali2020serial}, the duration~\cite{walsh2020duration} and the infectiousness~\cite{article}. 
Others are about the trend forecast of COVID-19. By combining real data with the model, these works aim to predict the development of COVID-19. Since the beginning of the COVID-19 outbreak, it has been a hot topic. The most common method to predict the number of future infections is to fit real-time data, such as propagation models~\cite{ghamizi2020data, ren2020derivation} and Bayes~\cite{gaglione2020adaptive, de2020epidemiological}. Due to the global characteristics of the epidemic, many focus on combining population mobility data with the propagation model~\cite{2020Understanding, jia2020population, huang2020understanding}.

SIR~\cite{newman10} is a widely used model, and its related models have been well studied in the literature.
Apart from the combination of the population mobility, there are additional aspects that are studied: The SEAIRD model allows for further division of the population~\cite{liu2020towards, nino2019two}; environmental parameters are added to the models~\cite{zhang2017forecasting, Bherwani2020}; and machine learning algorithms are used~\cite{zhang2020predicting, achterberg2020comparing}.

It is worth noting that the significance of prevention has also become the subject of many studies, including reducing population mobility~\cite{jia2020population}, lockdown of the city~\cite{tian2020investigation}, keeping social distance~\cite{du2020effects}, and vaccination~\cite{2021Vaccine,2021Time}.
Based on the development of the epidemic situation before and after prevention, the role of prevention is analyzed. 
Almost all the results affirm the role of prevention and point out that the epidemic would worsen more seriously without it.

The above works have studied various aspects of on COVID-19 pandemic, but rare work takes the effect of adaptive changes of prevention measures over time on the epidemic development into account.
During the COVID-19 pandemic, many regions gradually adopt adaptive prevention strategies to try to balance epidemic prevention and economic growth. Different regions often employ different prevention strategies, which may bring new phenomenon that does not exist in the traditional epidemic models.
We propose two extensions to the widely used SIR model, considering adaptive changes in prevention strategies based on the current infection severity and the population mobility between multiple regions.
We show the plateau phenomenon for the first time to address both the theoretical and the empirical perspectives.
We verify from the real-world data that the plateau phenomenon does exist in many regions of the world in the current COVID-19 development, indicating that this is an important phenomenon to study.

This work provides new insights into the development trend of COVID-19 and its prevention in cities. 
The development of the epidemic not only has ups and downs but also can have a plateau period of steady development. 
Under the adaptive prevention strategies of the city, we may be able to control the epidemic at a low cost and achieve long-term coexistence with the virus. 
This is a new option for outbreaks that would not fade away in a short period of time.

\section{Results}
To make it easier to understand our results, we summarize the symbols used in this paper, as shown in Table~\ref{tab:symbol}. 
We show our findings below from three aspects: simulation, the real world, and theory.

\begin{table*}[t]
    \setlength{\abovecaptionskip}{2mm}
	\centering
    \caption{A brief explanation of symbols.}
	\small
	\label{tab:symbol}
	\setlength{\tabcolsep}{2mm}
	\begin{threeparttable}				
		\begin{tabular}{l|c|c|c}
			\toprule
			\textbf{symbol} &  \textbf{explanation}  &  \textbf{symbol} &  \textbf{explanation} \\
			\midrule 
			 S &   the number of susceptible people  &   I  &   the number of infected people \\
			\midrule 
			 R &   the number of recovered people  &   N  &   total number of people \\
			\midrule 
			 $\beta$ & infection rate  &  $\gamma$  &   recovery rate \\
			 \midrule 
			 $r$ & the number of contacts; prevention parameter  &  t  &   day \\
			 \midrule 
			 $\theta$ & prevention threshold  &  k  &   reaction speed \\
			 \midrule 
			 $[r_L, r_H]$ & the range of changing prevention parameter  &  $[p_L, p_H]$  &   the range of changing proportion of floating population \\
			 \midrule 
			 $r_0$ & initial prevention parameter  &  $p_0$  &   initial proportion of floating population \\
			 \midrule 
			 $\tilde{r}_t$ & effective prevention parameter  &  $I_0$  &   the initial number of infected people \\
			 \midrule 
			 IC & initial condition of the city $(N, \frac{I_0}{N})$  &  EE  &   external environment \\
			\bottomrule
		\end{tabular}
	\end{threeparttable}
\end{table*}

\subsection{Simulation Results}
\label{simulationexperiment}

Based on the epidemic model with the prevention parameter we proposed, we take the COVID-19 as an example to simulate the epidemic development in a single city and in multiple cities, which use one day as the time unit.

\subsubsection{Prevention Strategy}
\label{sec:ps}

During the epidemic, different cities or regions may have different strategies and measures to deal with infectious diseases.
For example, during the COVID-19 outbreak, some regions imposed mandatory mask wearings while some did not, many enforce social distancing,
	but at different scales, such as limited the size of the gathering of no more than 50 or 100 people, closing schools, or restricting
	travels to and from high-risk regions.
All these restrictions can be reflected in two parameter changes within our model.

First is the prevention parameter $r_i$ for each city $i$.
Different social distancing measures essentially all reduce the number of people any individual would meet in a time unit, which is
	modeled by $r_i$.
Thus we use $r_i$ to model the prevention measures, with lower $r_i$ meaning a high level of prevention measures and high $r_i$ meaning
	a low level of prevention measures.
Another parameter affected by the prevention measure is the fraction of 
	the floating population $p_i$ --- the proportion of the floating population in the urban population.
A higher prevention level decreases the fraction of the floating population, while a lower prevention level increases the fraction of the floating population.

Most cities or regions also dynamically adapt their prevention measures based on the current infection data --- when the new infection cases
	rise, the prevention measures intensify, and when the number of new cases drops, the prevention measures are relaxed.
We model this scenario by setting a prevention threshold and dynamically changing the prevention parameter $r_i$ and 
	the fraction of the floating population $p_i$
	based on whether the current infection cases are below or above the prevention threshold.
Technically, we have three main factors for deciding and changing the prevention measures:

\begin{itemize}
\item 
{\bf Prevention threshold}, which sets a limit on the number of new infections within a time period.
New infections only include new cases caused by intra-city infections, excluding imported cases brought outside the city.
When the new number of infected people each day for 14 consecutive days is lower than the prevention threshold, it shows that the epidemic situation is getting under control, and the prevention level begins to decline ($r$ increases). 
When the number of newly infected people on any given day is higher than the prevention threshold, it shows that the epidemic may worsen again, and the prevention level begins to rise ($r$ decreases).

\item 
{\bf The range of changing prevention parameters}, which gives the range of changes
	of each prevention parameter $r_i$ and the fraction of floating population parameter $p_i$.
	When the prevention measure intensifies, the values are changed to $r_{L}$ and $p_{L}$ respectively, and when the prevention
	measure relaxes, the values are changed to $r_{H}$ and $p_{H}$.

\item 
{\bf The reaction speed}, which measured by the number of days $k$ to complete the change.
In the real world, the change of prevention measures never happens instantaneously. 
So we use $k$ to model the number of days needed to change $r_{L}$ to $r_{H}$ and $p_{L}$ to $p_{H}$, or in the reverse order.
We use $r_{t+1}=r_{t} \cdot (\frac{r_{H}}{r_{L}})^{\frac{(-1)^{s}}{k}}$ and $p_{t+1}=p_{t} \cdot (\frac{p_{H}}{p_{L}})^{\frac{(-1)^{s}}{k}}$ to represent daily changes in $r$ and $p$.
When $s=0$, it means to relax the prevention. 
When $s=1$, it means to strengthen prevention.
$r$ and $p$ are always within the range of the prevention strategy.

\end{itemize}

In general, prevention strategies can be divided into two cases.
The case that the prevention parameter remains unchanged: the parameters of prevention parameter $r$ and the proportion of floating population $p$ are fixed values, and there is no prevention threshold and reaction speed.
This corresponds to the classical infectious disease model, and we refer to it as a model with a {\em constant prevention strategy}.
The case corresponds to the adaptive changes in prevention parameter that we commonly see during the COVID-19 pandemic:
	the full strategy is characterized by the prevention threshold $\theta$, the $r$-range $[r_{L},r_{H}]$, the
	$p$-range $[p_{L},p_{H}]$, and the reaction speed $k$, and we refer to such ones as models with an {\em adaptive prevention strategy}.

\begin{figure*}[t]

	\small
	\centering     
	\subfloat[\scriptsize  Single city with constant strategy ]{\label{singlecitywithoutstrategy}\includegraphics[width=0.48\textwidth]{./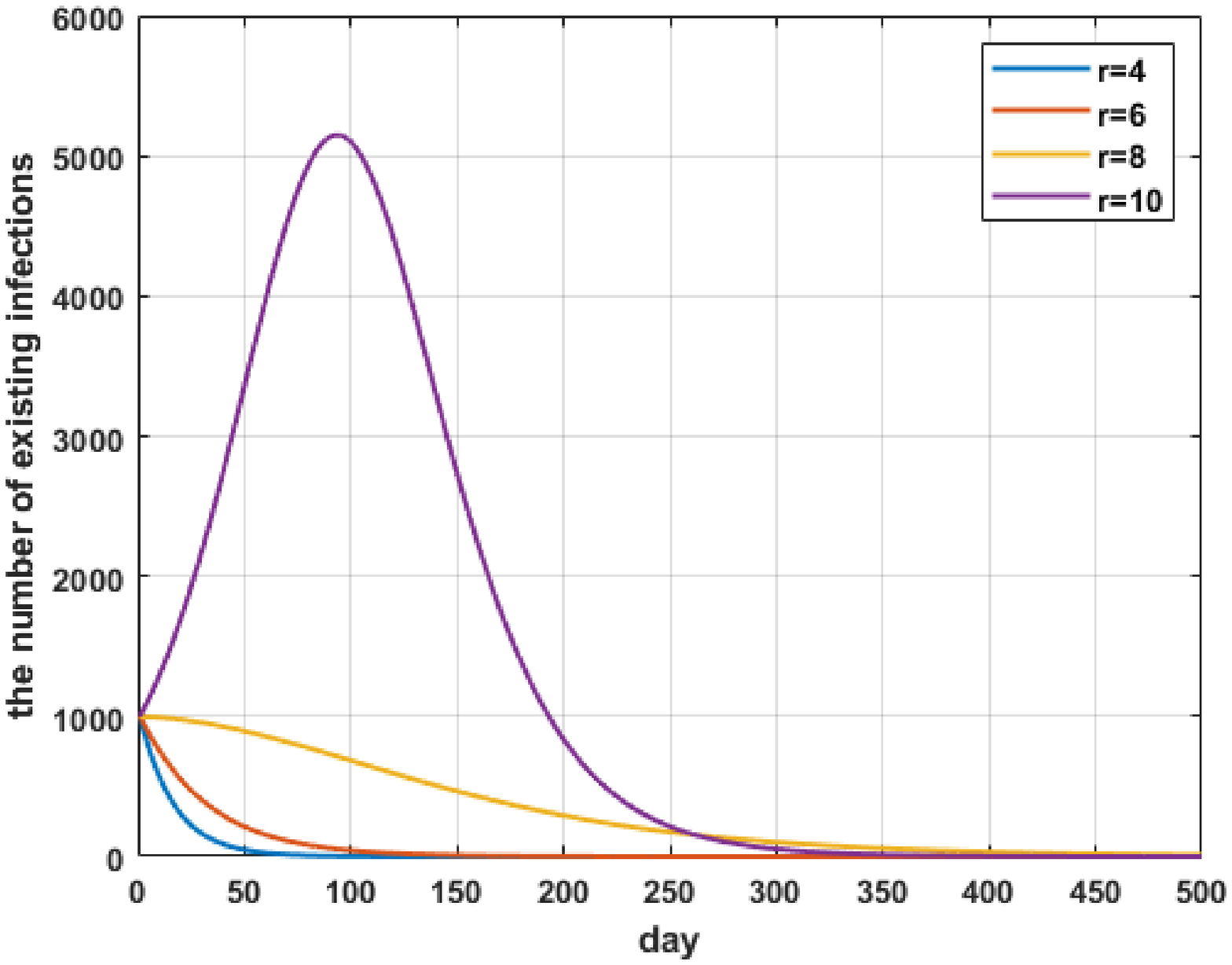}}	
	\subfloat[\scriptsize  Single city with adaptive strategy ]{\label{singlecitywithstrategy}\includegraphics[width=0.48\textwidth]{./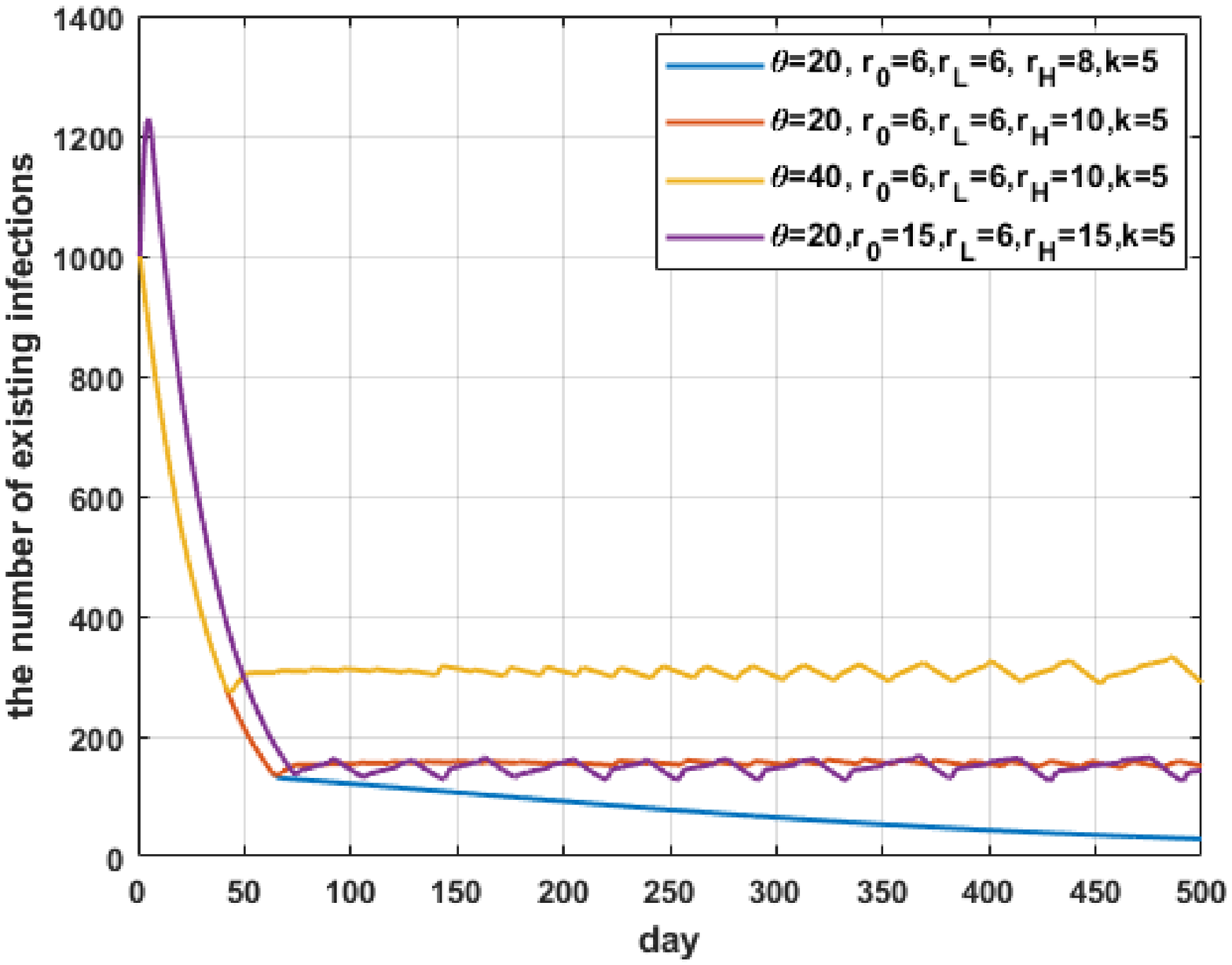}}

	\subfloat[\scriptsize Multiple cities with constant strategies ]{\label{multiplecitieswithoutstrategy}\includegraphics[width=0.48\textwidth]{./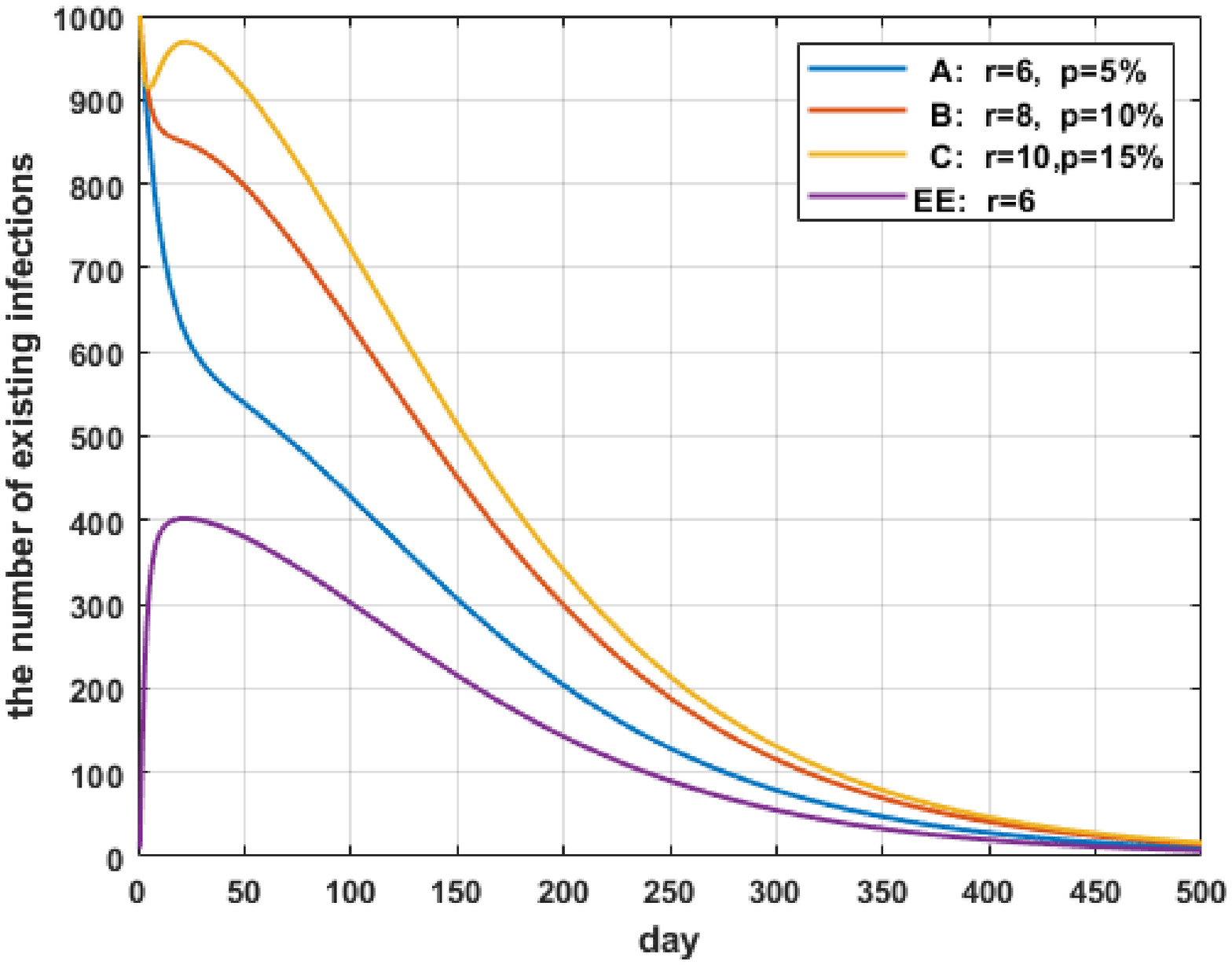}}
	\subfloat[\scriptsize  Multiple cities with mixed strategies ]{\label{multiplecitieswithstrategy}\includegraphics[width=0.48\textwidth]{./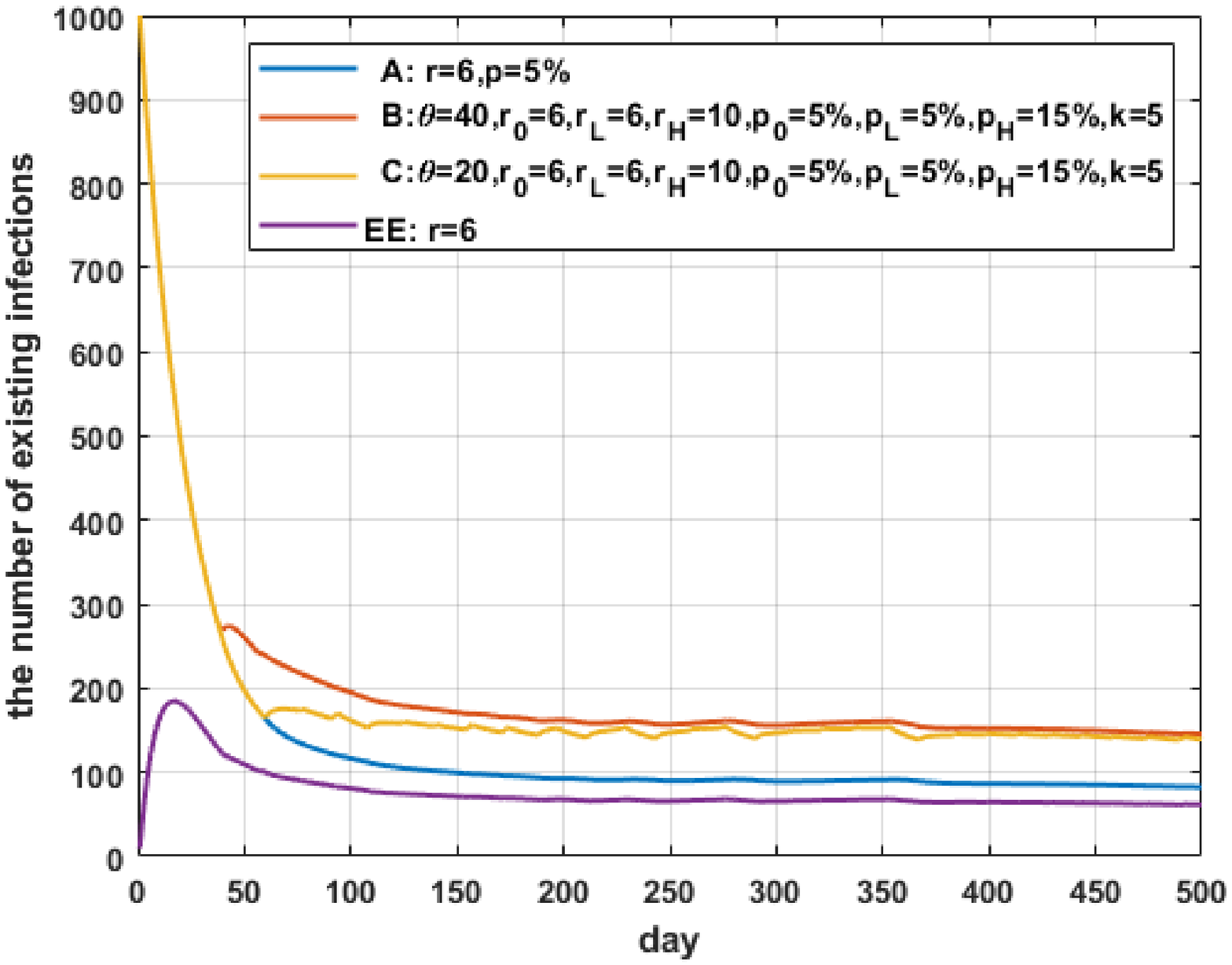}}
	\caption{Infectious disease model with constant or adaptive prevention strategies.}
	\label{plateau}

\end{figure*}

Figure~\ref{plateau} shows the development of the epidemic in a single city and multiple cities with constant or adaptive 
	prevention strategies.
The y-axis is the value of $I$ for a particular day, which is the number of current existing infected individuals that can still transmit
	the disease (note that it is not the number of newly infected cases on that day).
As a convention, for a constant prevention strategy that does not change $r$, e.g. $r$ is always $6$, we denote it as $r=6$ in the 
figure legend; for an adaptive prevention strategy, 
	if we initially set a prevention strategy to be at a certain level, and then it will change 
	in the range $[r_L, r_H]$ when the newly infected case per day drops below the threshold, we denote it as $r_0=6$ and 
	$r_L,r_H$ to the appropriate values in the legend.
The fraction of floating population $p$ and the corresponding $p_0$ and $p_L, p_H$ are treated in the same manner.
The initial conditions of cities are all $IC = (200000,0.5\%)$, while the external environment is $IC = (100000,0.01\%)$.
\begin{itemize}
\item
Figure~\ref{singlecitywithoutstrategy}: In the single-city model with a constant prevention strategy, the change curve of the number of existing infections is smooth.
The number of infections either continues to decline ($r \le 8$), or rises first and then falls ($r>8$).
\item
Figure~\ref{singlecitywithstrategy}: In the single-city model with an adaptive prevention strategy, the change curve of the number of existing infections is no longer smooth. 
Compared to the single-city model with a constant prevention strategy (Figure~\ref{singlecitywithoutstrategy}), it has a distinctive feature ---
	a long period in which the number of existing infected people remains more or less the same, with small fluctuations.
We refer to such a flat period as a \textbf{plateau}, and it is the main phenomenon we are going to study in this paper. 
\item
Figure~\ref{multiplecitieswithoutstrategy}: This is the case in three cities that have constant prevention strategies. 
In the multi-city model, it can be found that the urban epidemic situation affects each other under the influence of population mobility.
The number of infections in city C should have risen to a higher level ($I>5000$), but instead, it begins to decline earlier under the influence of other cities. 
On the contrary, there has been an increase in the number of infections in the external environment EE which should have been declining ($r=6$).
There is no plateau phenomenon in this case.
\item
Figure~\ref{multiplecitieswithstrategy}: In this case, two of the cities (B and C) adopt adaptive prevention strategies, and one city (A) as well
	as the external environment use a constant prevention strategy.
The phenomenon of horizontal fluctuation in the number of existing infections, i.e. the plateau, appears again.
More interestingly, the plateau appears in every city,  has occurred in every city and the environment, even if some cities and the environment
	adopt a constant prevention strategy.
\end{itemize}

\subsubsection{Plateau}
We can see that the adoption of adaptive prevention strategies is strongly correlated to the 
	emergence of the plateau.
When there is no adaptive prevention strategy deployed, we do not see the plateau phenomenon(Figure~\ref{singlecitywithoutstrategy} and Figure~\ref{multiplecitieswithoutstrategy}). 
However, the adoption of adaptive strategies itself is not sufficient to guarantee the appearance of the plateau, which can be
	seen in the blue curve in Figure~\ref{singlecitywithstrategy}.
Moreover, in the multi-city situation, even if some city does not adopt an adaptive prevention strategy, it still exhibit a plateau
	(city A in Figure~\ref{multiplecitieswithstrategy}), seemingly under the influence of other cities with adaptive prevention strategies.
Thus, the actual condition of the emergence of plateau needs to be further analyzed.

\begin{figure*}[t]

	\small
	\centering     
	\subfloat[\scriptsize ]{\label{threshold}\includegraphics[width=0.33\textwidth]{./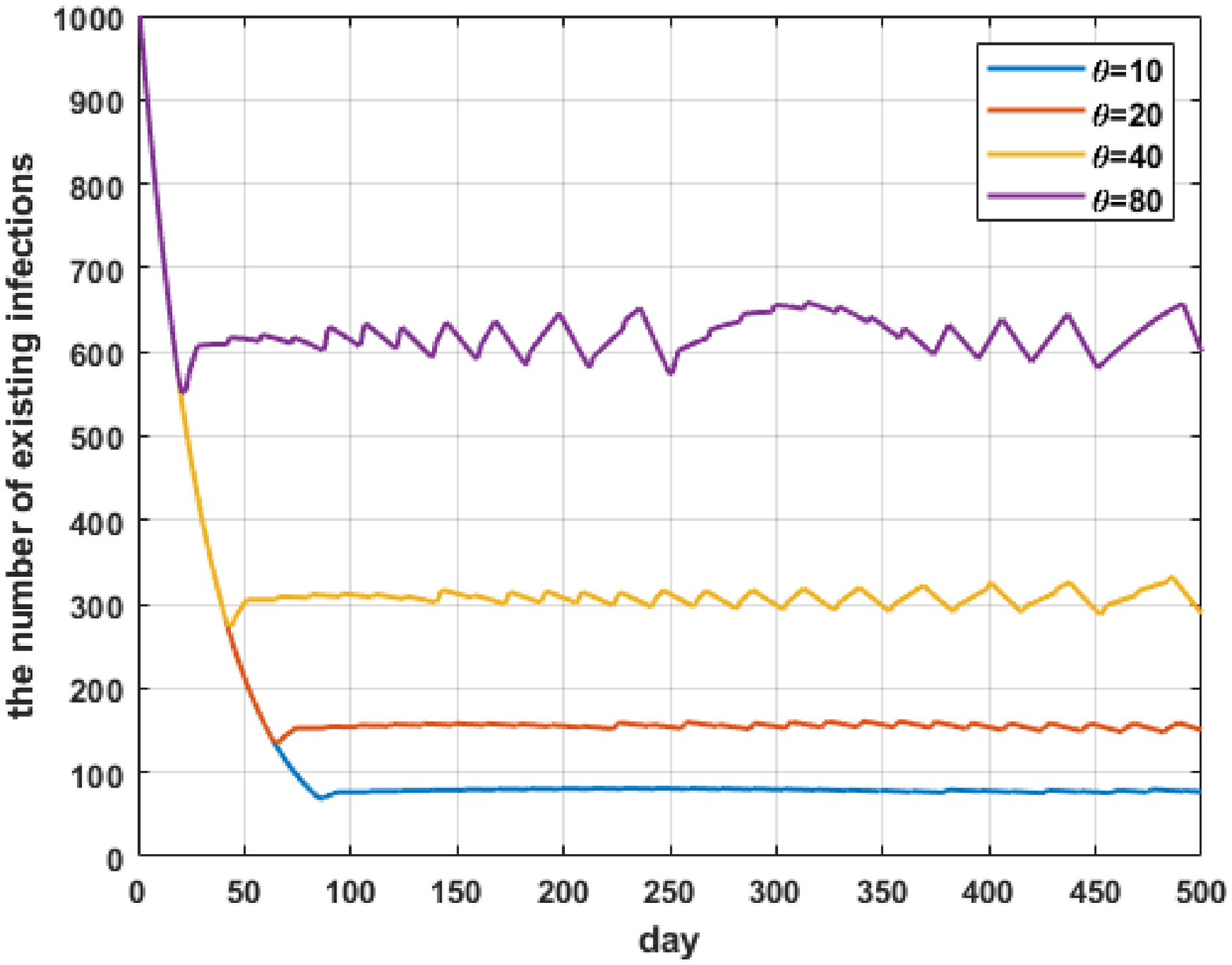}}	
	\subfloat[\scriptsize ]{\label{changingrange}\includegraphics[width=0.33\textwidth]{./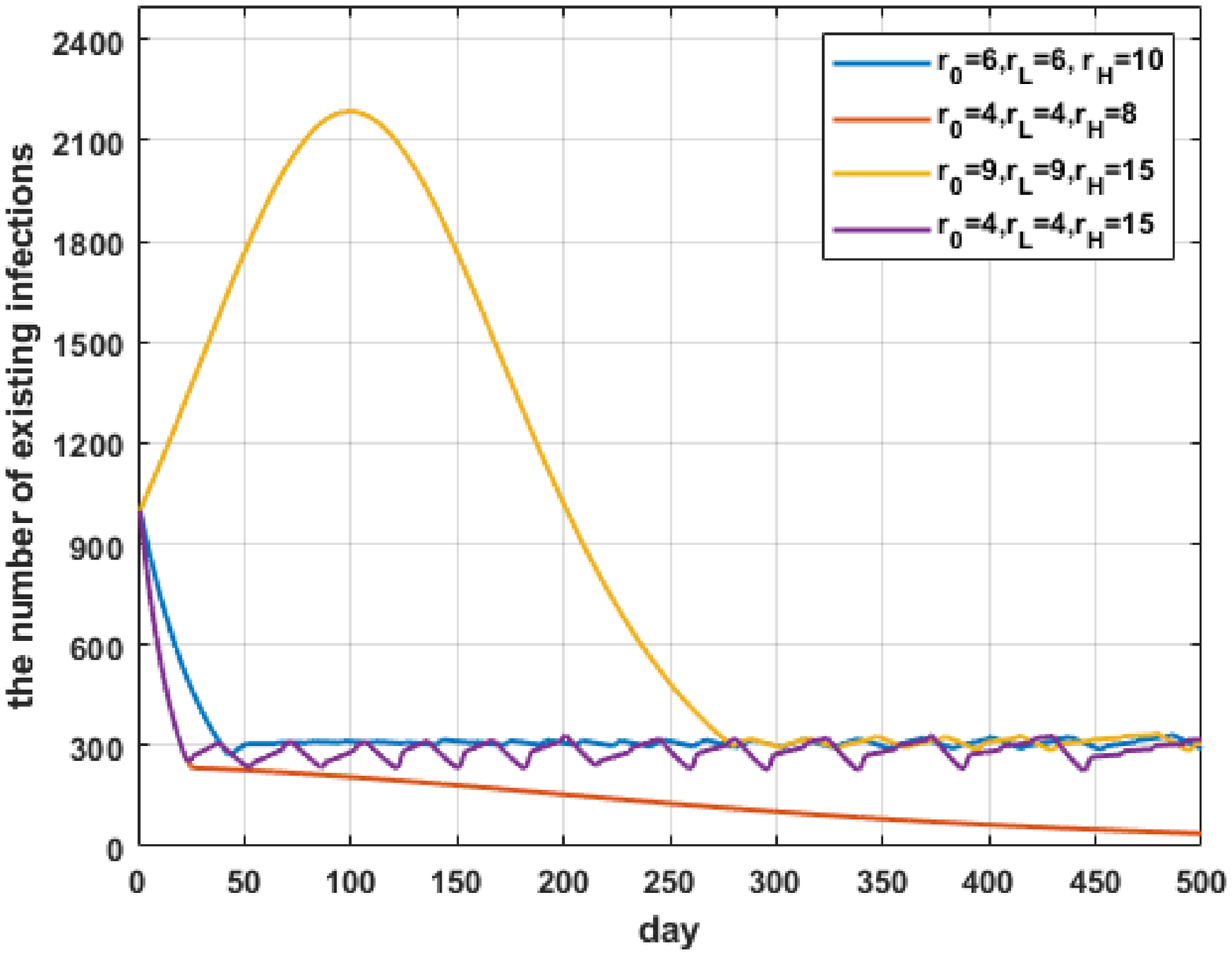}}
	\subfloat[\scriptsize ]{\label{changerate}\includegraphics[width=0.33\textwidth]{./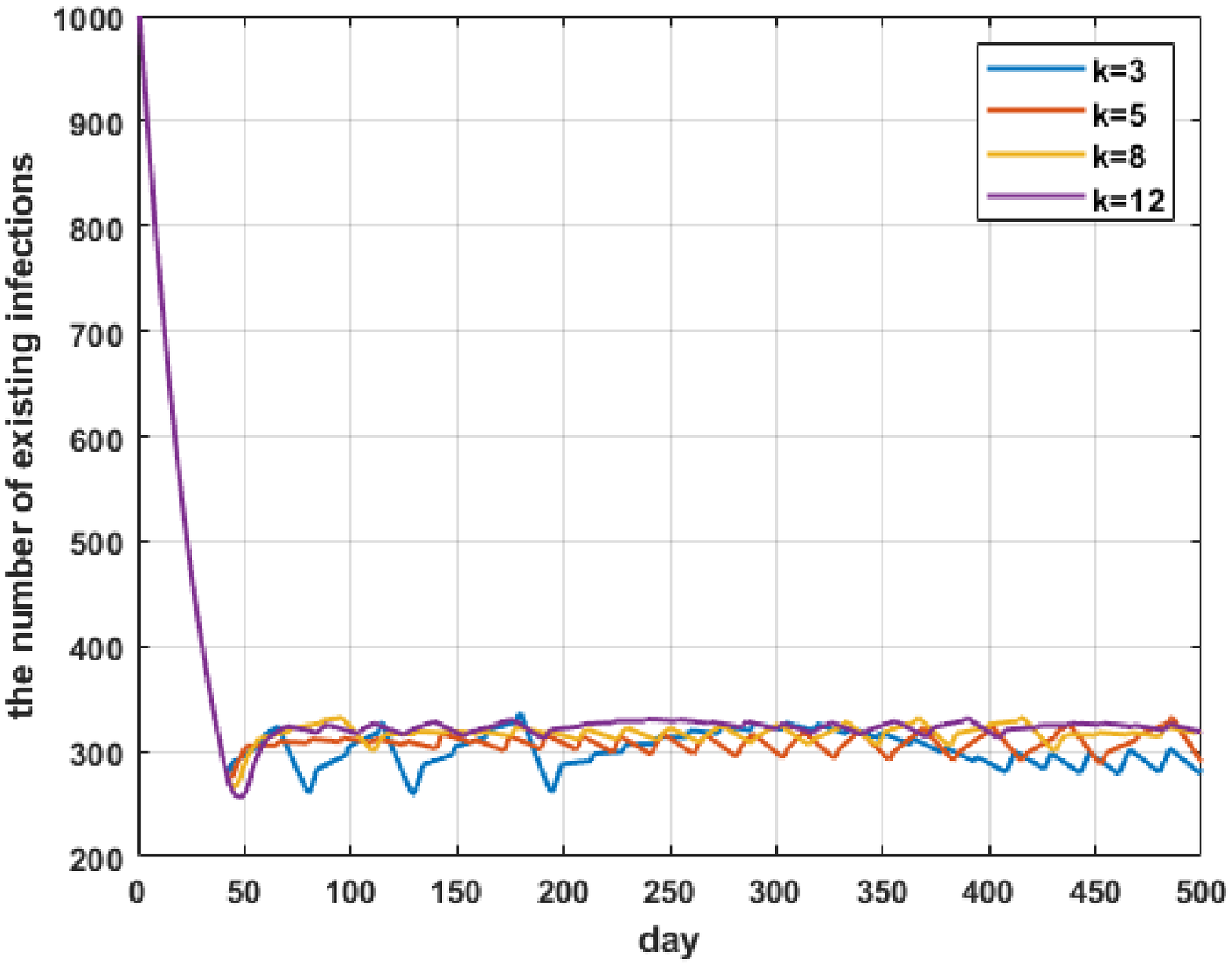}}
	
	\caption{The influence of indicators on the plateau.}
	\label{strategyinfluence}

\end{figure*}

\subsubsection{The Effect of Prevention Strategies}

We use the single-city model for this test to make the effect of prevention strategies more direct, as shown in Figure~\ref{strategyinfluence}: 
\begin{itemize}
    \item \textbf{Prevention threshold $\theta$.}
    As shown in Figure~\ref{threshold}, when there is a plateau in the development of the epidemic, it can be found that the prevention threshold $\theta$ is positively correlated with the height of the plateau. The greater the prevention threshold, the higher the plateau.

    \item \textbf{The range of changing the prevention parameter $[r_L,r_H]$.}
    As shown in Figure~\ref{changingrange}, when there is a plateau in the development of the epidemic, different $[r_L,r_H]$ will not change the height of the plateau.
    However, $[r_L,r_H]$ will determine whether the plateau occurs.
    When $r_0=4, r_L=4, r_H=8$, the plateau does not occur, and the number of existing infections continues is declining.
    When $r_0=9, r_L=9, r_H=15$, the number of existing infections will rise at first and then decrease, but there will still be a plateau after.
    A closer look at the blue line and the purple line, the size of the changing range is related to the amplitude.
    The greater the range of change, the greater the amplitude.

    \item \textbf{The reaction speed $k$.}
    Figure~\ref{changerate} shows that $k$ affects the fluctuation amplitude of the plateau. The smaller the $k$, the greater the fluctuation of the number of existing infections during the plateau.
\end{itemize}

Through our simulations, we can see that the emergence of the plateau is related to the range of changing the prevention parameter $[r_L,r_H]$, and the height of the plateau is positively correlated with prevention threshold $\theta$. It shows that the integration of prevention strategies, especially the adaptive ones, would change the dynamics of the epidemic that is not covered by the traditional epidemic models.

\begin{figure*}[t]

	\small
	\centering     
	\subfloat[\scriptsize  ]{\label{multiplecitiesplateau}\includegraphics[width=0.33\textwidth]{./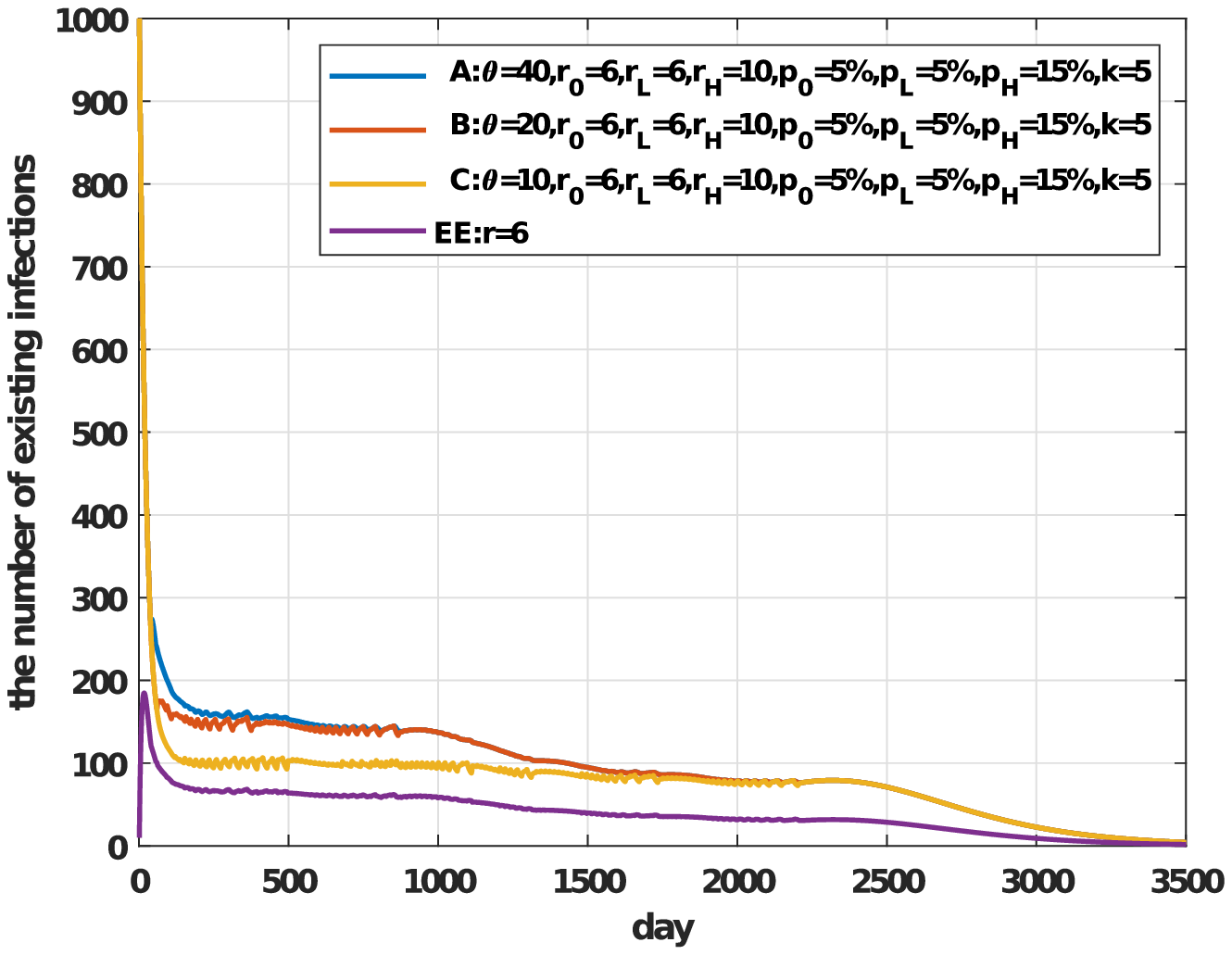}}	
	\subfloat[\scriptsize
	]{\label{onecitywithplateau}\includegraphics[width=0.33\textwidth]{./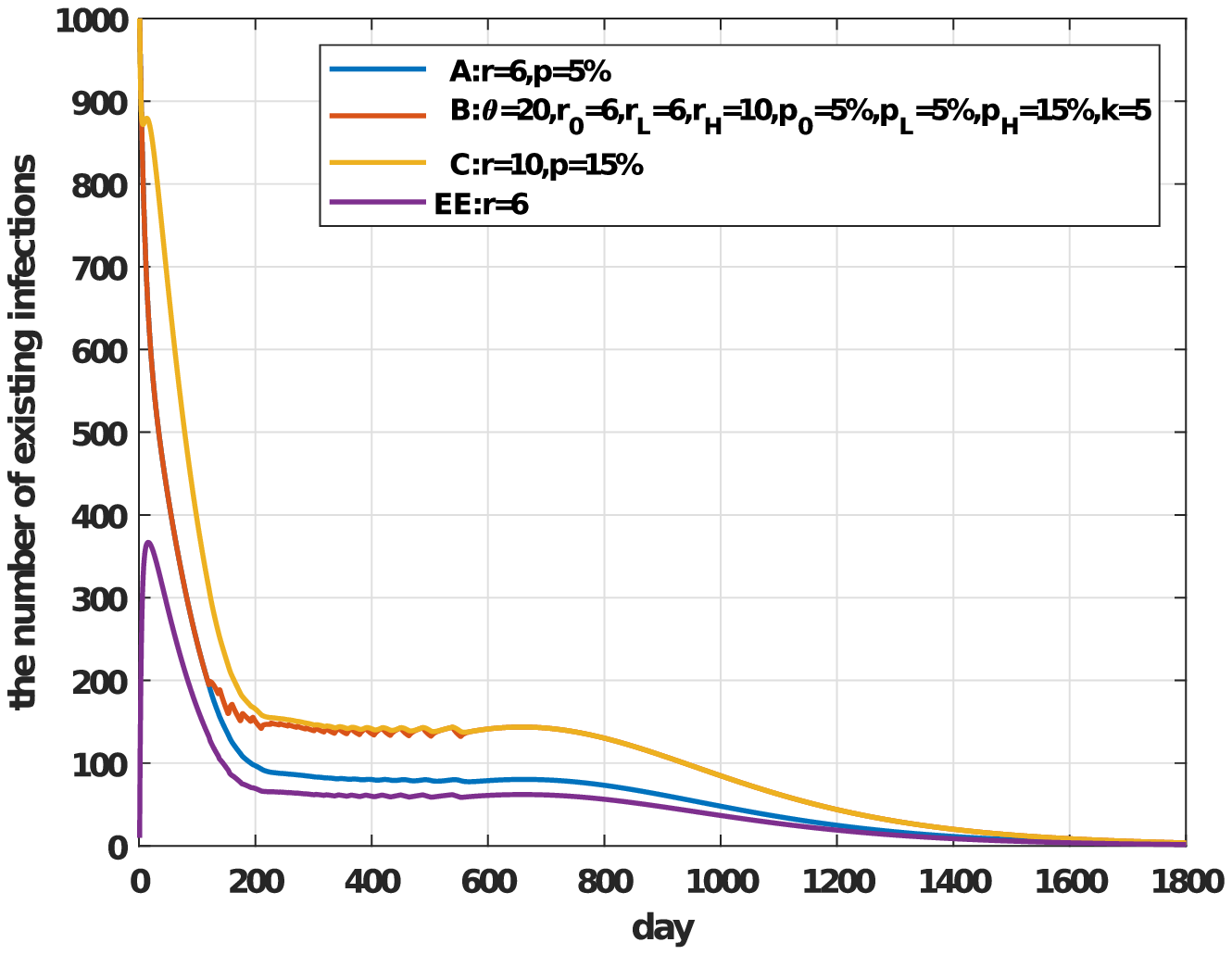}}
	\subfloat[\scriptsize ]{\label{noplateau}\includegraphics[width=0.33\textwidth]{./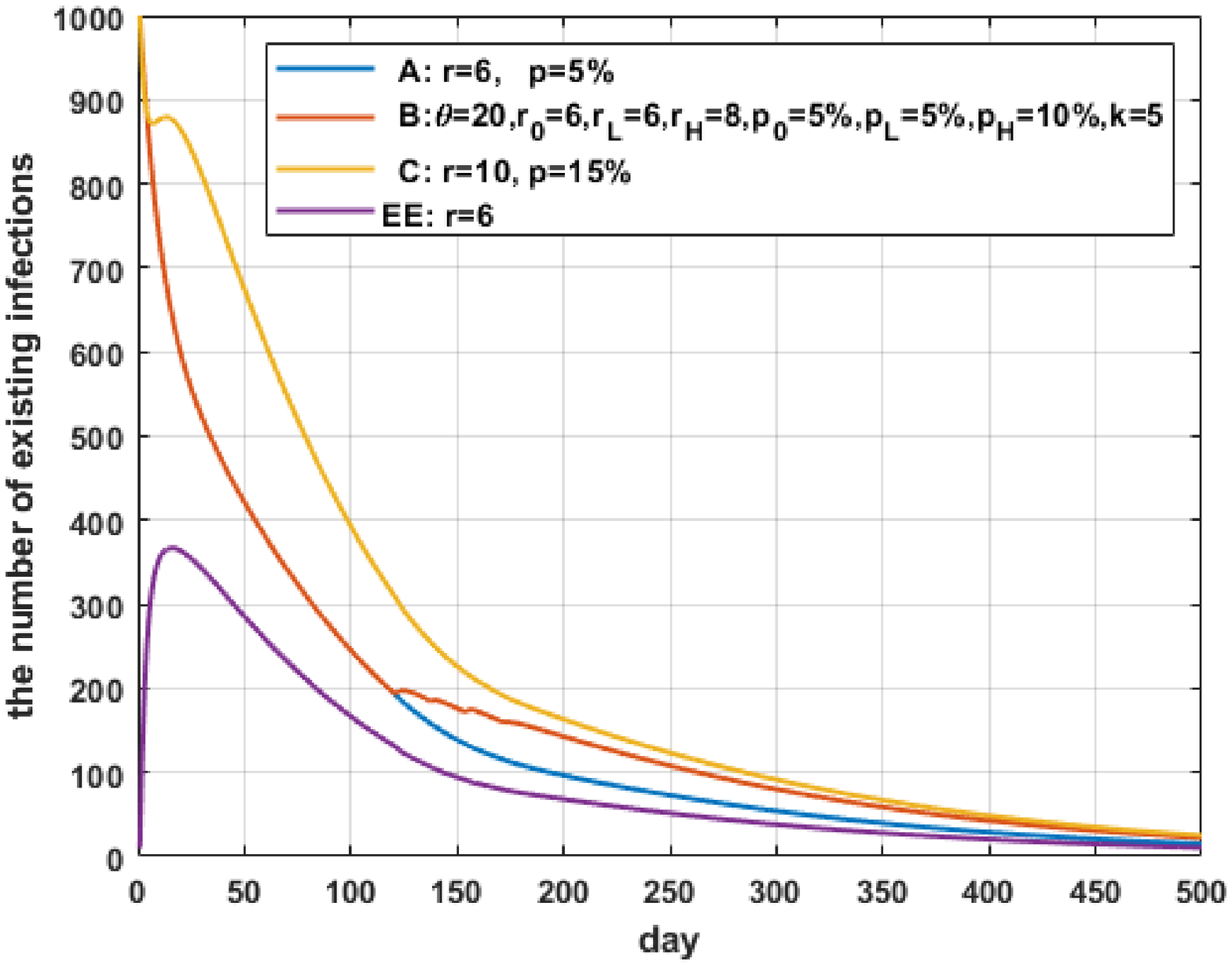}}
	
	\caption{The plateau phenomenon under the multi-city model.}
	\label{multistrategyinfluence}

\end{figure*}

As shown in Figure~\ref{multistrategyinfluence}, it is found that the plateau also exists in the multi-city model:
\begin{itemize}
    \item \textbf{Changes in the plateau of cities.} In the multi-city model, if all cities have the same $IC$ and adopt adaptive prevention strategies, they will all appear to plateau.
    The plateau of cities A, B and C end differently. City A is the earliest, and city C is the latest.
    However, when $I$ of city A drops to the same as city B, it will synchronize with city B and enter the plateau, as shown in Figure~\ref{multiplecitiesplateau}.
    When the plateau of city B (and A) ends, it will be synchronized with city C.
    The emergence of the synchronization phenomenon is due to the same prevention intensity among cities and the balance of the flow of infected people.
    \item \textbf{Cities with constant prevention strategies could also appear to plateau in the multi-city model.} In Figure~\ref{onecitywithplateau}, both city A and city C adopt a constant prevention strategy, in which the number of infections in city A should rise at first and then decline, and the number of infections in city C should be declining all the time.    However, under the influence of population mobility by city B, all cities have appeared plateau. Thus it can be seen that the emergence of the plateau in the multi-city model is also related to population mobility.
    \item \textbf{The cities that would have appeared to plateau did not appear that in the multi-city model.} As shown in Figure~\ref{noplateau}, city B is supposed to have a plateau, while city A and C, one will go down, the other will go up. However, all three cities are in decline. City A has been rising for some time, but obviously falling back ahead of time. This further highlights the impact of population mobility in the multi-city model.
\end{itemize}

The above experimental results show that quite an interesting plateau phenomenon emerges even in the multi-city model with complex epidemic dynamics within cities and population mobility dynamics across cities.
In particular, the results strongly indicate that all cities are interdependent in terms of epidemic development --- their plateau phenomenon co-occur, and even a city that would not have a plateau by itself would appear to plateau under the influence of other cities with adaptive strategies.
This suggests that in battling the COVID-19 pandemic, one should not look only at the local prevention strategies, but also need to take into account other regions' strategies and population mobility, and perhaps some global coordination would be more effective.

\subsection{Empirical Results}
\label{sec:realworld}

In fact, the plateau phenomenon also appears in the real world epidemic, in particular in the current COVID-19 pandemic. 
We collect the epidemic data as of October 25, 2021, from data source DXY\footnote{https://ncov.dxy.cn/}, which collected data from the World Health Organization (WHO), and China National Health Commission.
We have carried on the plateau fitting to the data of 213 countries (through the coefficient of determination) and found that 77 countries have obvious plateau phenomenon in more than one month~\footnote{https://github.com/Complex-data/COVID-19\_Data}. This is double the 35 countries we counted before (October 2020).
The coefficient of determination $D^2$ is calculated as follows:

\begin{align*}
D^2 & = 1- \frac{\sum_{i=1}^{n}(y_i-Y_i)^2}{\sum_{i=1}^{n}(Y_i-\overline{Y})^2},
\end{align*}
where $y_i$ is the number of existing infections of simulation, $Y_i$ is the number of existing infections of real data, and $\overline{Y}$ is the average number of existing infections of real data. $D^2 \in (-\infty, 1]$, the closer the value of $D^2$ is to $1$, the better the degree of the fitting.

Among other countries, 113 of them are still suffering worsening situations or are facing a new rising of the epidemics, while 22 countries are or have subsided.

\begin{figure*}[!tpb]
	\centering

	\subfloat[\scriptsize Poland ($D^2=0.9074$)]{\includegraphics[width=0.48\textwidth]{./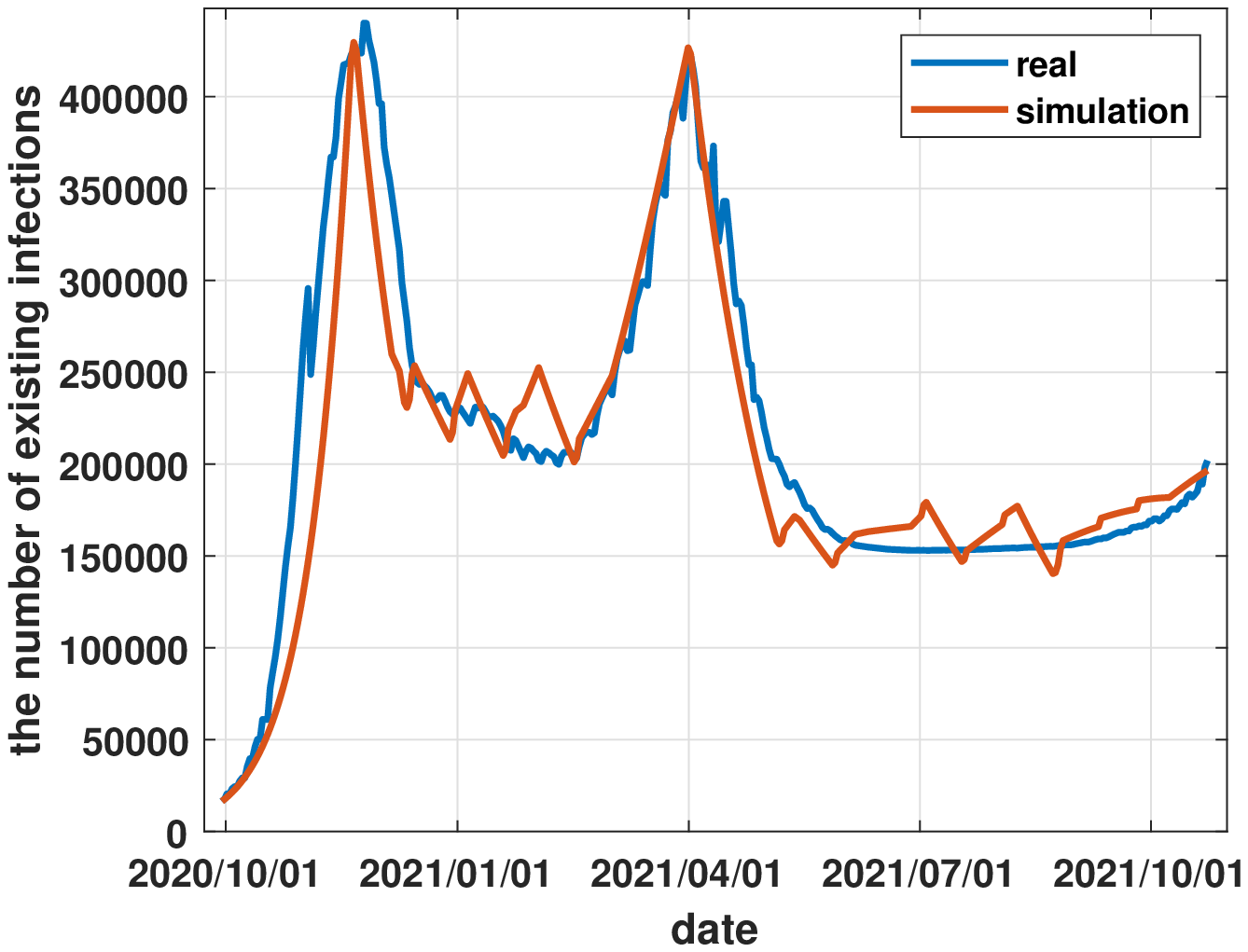}}
	\subfloat[\scriptsize Philippines ($D^2=0.9315$)]{\includegraphics[width=0.48\textwidth]{./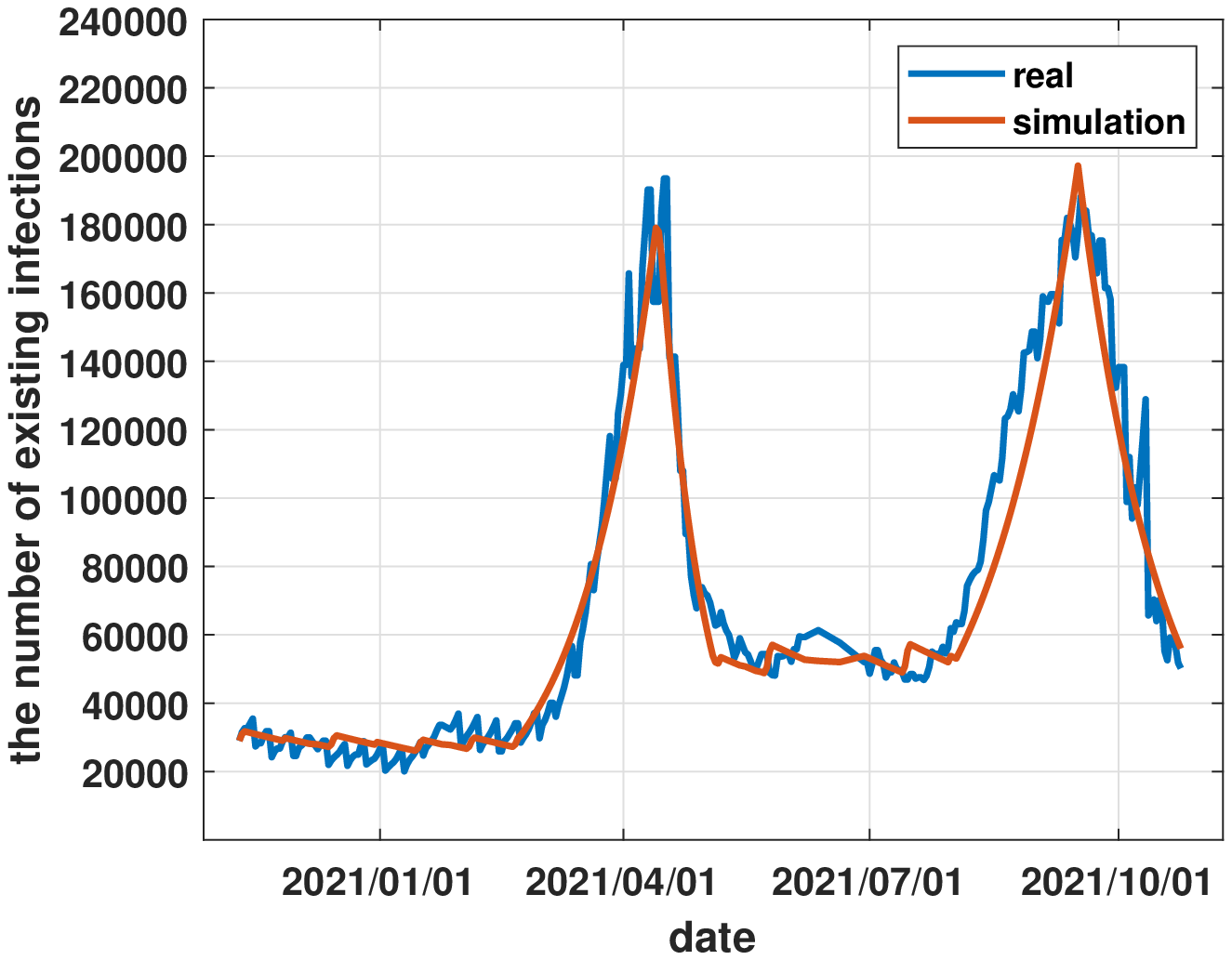}}

	\subfloat[\scriptsize Portugal ($D^2=0.9244$)]{\includegraphics[width=0.48\textwidth]{./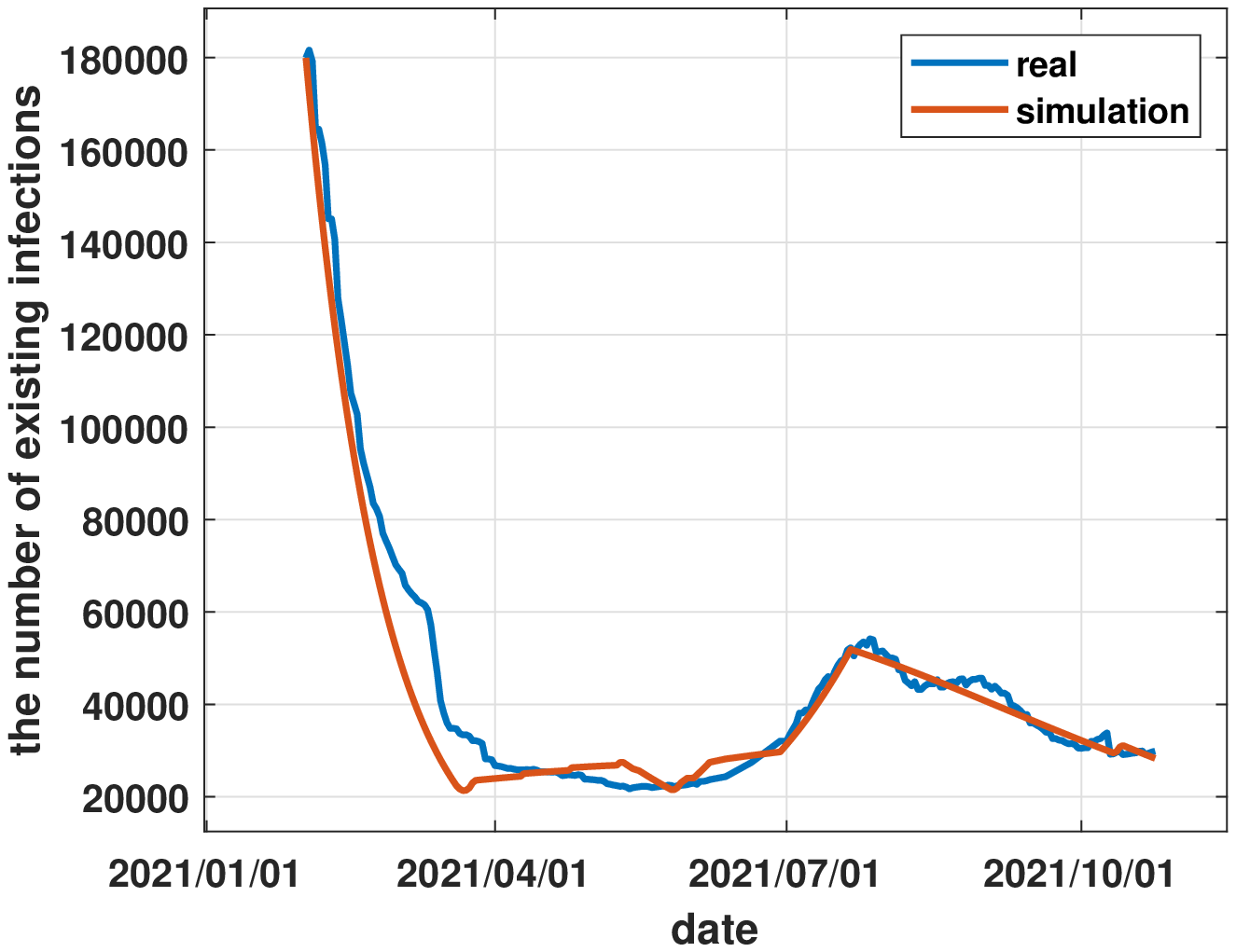}}
	\subfloat[\scriptsize Libya ($D^2=0.9713$)]{\includegraphics[width=0.48\textwidth]{./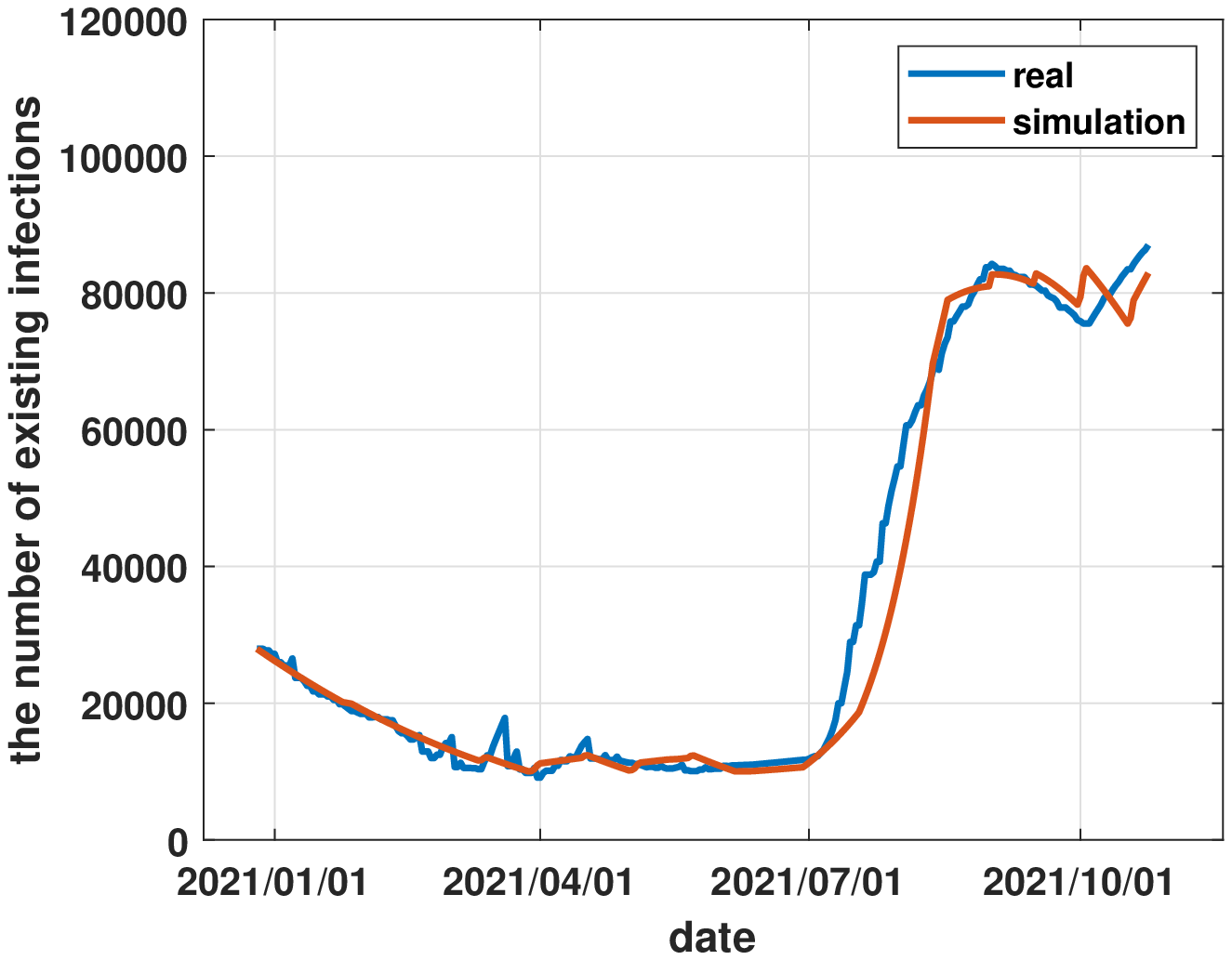}}
	\caption{\small Examples of countries that have a clear plateau phenomenon.
        The number of existing infections is calculated by the confirmed cases, minus the sum of recovery cases and death cases.
	}
	\vspace{-0.025in}
	\label{realworlddata}

\end{figure*}

Figure~\ref{realworlddata} shows 4 countries that have experienced plateaus, including Poland, the Philippines, Portugal, and Libya:

\begin{itemize}
    \item Poland, with a population of 38 million, had 16,478 existing infections at the beginning of October 2020. The number of existing infections had risen for 50 days, reaching 440,000. Then, possibly under the prevention of the local government, the number of existing infections dropped to about 230,000 and entered a brief plateau. However, in March 2021, the number of existing infections in Poland rose again to 400,000. It was quickly brought under control in April and entered a three-month plateau in June with 150,000 existing infections. In October, the number of existing infections in Poland showed another upward trend.
    \item The Philippines, with a population of 102 million, had been on a plateau from November 2020 to March 2021, and the number of existing infections remains at around 30,000. After that, the epidemic had rebounded, with the number of existing infections reaching as high as 190,000 in early April 2021. Under a period of prevention, the number of existing infections in the Philippines dropped to 50,000 and entered a new plateau (May-July). In August 2021, the epidemic in the Philippines rebounded again, and the number of existing infections rebounded to 180000 before being brought back under control in late September. The Philippines may enter a plateau again after that.
    \item Portugal, with a population of 10.35 million, had about 180,000 existing infections in February 2021. It was possible that under the adaptive prevention of the local government, the number of existing infections had been controlled at 20,000 to 30,000 since April, entering a three-month plateau. In July, the epidemic in Portugal intensified again. Fortunately, when the number of existing infections rose to 54,000, it was soon brought under control. Until October, the decline in the number of existing infections in Portugal slowed down and may once again enter a new plateau.
    \item Libya, with a population of 6.87 million, had 28,000 existing infections at the end of December 2020. The number of existing infections had been declining until it reached a plateau in April 2021 and remained at around 10,000. The plateau lasted three months and the outbreak broke out again in July. The number of existing infections in Libya rose to 80,000 in August and it would rise further in October. But if the Libyan government takes appropriate prevention, the epidemic will enter a new plateau and keep the number of existing infections around 80,000 without rising.
\end{itemize}

These data clearly support the existence of the plateau phenomenon in the real world, which means it is important for us to study in more detail the plateau phenomenon. Although it may be difficult to pinpoint the exact reasons that cause the plateau phenomenon for each country, from our simulation result observations, we believe that the plateau phenomenon in these countries is likely related to the adaptive prevention strategies adopted by the countries.

For most countries, especially those where the COVID-19 epidemic is still worsening, if proper adaptive prevention strategies are enforced, we could expect that they will also enter the plateau in the future. Thus, understanding the plateau phenomenon would be useful for most countries that are suffering from the COVID-19 epidemic. 

\subsection{Theoretical Results}
In the simulation study, we find that epidemic development might enter a plateau period after the SIR model is integrated with an adaptive prevention strategy. The results suggest some simple relationships such as the emergence of the plateau is related to the changing range of prevention parameter $[r_L,r_H]$, and the height of the plateau is positively correlated with the prevention threshold $\theta$. In this section, we will analyze theoretically the conditions of the emergence and the ending of the plateau as well as the properties of the plateau. Our analysis focus on the single-city model, while the multi-city model involves much more complicated dynamics due to the mobile populations between the cities.
	
\subsubsection{Effective Prevention Strategy and the Ideal Model}
As suggested from the simulation study, the plateau is related to the adaptive prevention strategy, but it is also the case that not every prevention strategy would generate plateau  (See Figure~\ref{changingrange}). We infer that this is related to the effective prevention parameter $\tilde{r}_t$, as defined in Supplementary Section A.
The effective prevention parameter $\tilde{r}_t$ is the value of prevention parameter $r$ when the effective reproduction number $R_t=1$.
Intuitively, when $R_t=1$, the number of infections an infected individual makes before recovered is exactly $1$, meaning that the number of existing infected cases would remain the same.
Thus, $\tilde{r}_t$ is the corresponding prevention parameter that would maintain the infection cases the same.
Note that in general the effective prevention parameter $\tilde{r}_t$ is different from the actual value of the prevention parameter $r$ used: the former is the would-be value to keep the infection cases the same while the latter is the actual value of the prevention strategy used at each day. This is also the reason we use different notations to distinguish the two. The effective prevention parameter $\tilde{r}_t$ has the following simple but important property:
\begin{lemma} \label{lem:monotoner}
	The effective prevention parameter $\tilde{r}_t$ is monotonically increasing with time $t$.
\end{lemma}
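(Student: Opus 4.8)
The plan is to unwind the definition of $\tilde{r}_t$ into an explicit closed form and then reduce the monotonicity claim to the elementary fact that the susceptible count never increases in the SIR dynamics. Recall from Supplementary Section~A that $\tilde{r}_t$ is the value of the prevention parameter $r$ that would make the effective reproduction number $R_t$ equal to $1$. In the model an infected individual makes on the order of $r$ contacts per day, each transmitting with probability $\beta$, a fraction $S/N$ of those contacts land on susceptibles, and the individual stays infectious for about $1/\gamma$ days; hence $R_t = \frac{\beta r}{\gamma}\cdot\frac{S_t}{N}$. Setting $R_t = 1$ and solving for $r$ gives
\begin{equation*}
\tilde{r}_t = \frac{\gamma N}{\beta\, S_t}.
\end{equation*}

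Since $\gamma$, $N$, and $\beta$ are fixed, $\tilde{r}_t$ is a strictly decreasing function of $S_t$, so it suffices to show that $t \mapsto S_t$ is monotonically non-increasing. This is immediate from the update rule for the susceptible compartment: on each day the only flow touching $S$ is the outflow of newly infected individuals, so $S_{t+1} - S_t = -\beta\, \tilde{r}_t\, \frac{S_t I_t}{N} \le 0$ (and analogously $\dot S = -\beta r S I / N \le 0$ in the continuous-time version), because $\beta, r, S_t, I_t \ge 0$. No mechanism in the single-city model returns a recovered individual to the susceptible class, so $S_t$ can never rebound. Combining the two observations, as $t$ increases $S_t$ weakly decreases and therefore $\tilde{r}_t = \gamma N/(\beta S_t)$ weakly increases, which is the claim.

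The main obstacle is not the monotonicity argument itself but pinning down the two modeling details on which it rests. First, I would need to confirm the exact algebraic form of $R_t$ used in Supplementary Section~A — in particular the population normalization (total $N$ versus the current head count) and whether the discrete daily update or its continuous analogue is the reference object — so that the closed form for $\tilde{r}_t$ is the correct one. Second, I would verify that the imported cases from the external environment and the floating-population bookkeeping affect only the infected and recovered accounting and never add to $S_t$; only then does the ``$S_t$ is non-increasing'' step survive in the full single-city model rather than in the textbook SIR special case. If $S_t$ could be replenished the statement would require an additional hypothesis ruling that out, but under the model as described in Section~\ref{sec:ps} this does not occur, and the lemma follows.
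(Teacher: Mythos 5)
Your proposal is correct and follows essentially the same route as the paper: solve $R_t = \frac{r\beta}{\gamma}\cdot\frac{S}{N} = 1$ for $r$ to get $\tilde{r}_t = \frac{N\gamma}{S\beta}$, then invoke the monotone decrease of $S$ in the single-city SIR dynamics (the paper simply asserts this decrease, which you justify explicitly via $\dot S = -r\beta S I/N \le 0$). One small slip: in your discrete update you wrote $\tilde{r}_t$ where the actual prevention parameter $r_t$ belongs, but since the sign argument only needs $r_t \ge 0$ this does not affect the conclusion.
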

\begin{proof}
	$\tilde{r}_t$ is defined as the value of $r$ when $R_t=1$, we have
	\begin{align}
	\label{effectivereproductionnumber}
    R_t &= \frac{r\beta}{\gamma} \cdot \frac{S}{N}\\
	\label{eq:effectiveprevention}
	\tilde{r}_t &= \frac{N \cdot \gamma}{S \cdot \beta}.
	\end{align}
	Since the number of susceptible people $S$ is monotonically decreasing with time $t$ while other parameters $N,\beta,\gamma$ stay constant, we have  that $\tilde{r}_t$ is monotonically increasing.
\end{proof}

In the model, we have several aspects that try to reflect the real-world scenario, such as 
	the 14 day grace period before relaxing the prevention parameter, and 
	the changing of the strategies taking some time governed by parameter $k$.
To emphasize the essence of the model, for the theoretical analysis, we consider the following simplified {\em ideal} single-city model, which is consistent with the principle of full model .
In the ideal model, once the daily new cases drop to the prevention threshold $\theta$, we immediately adjust the prevention strategy
	$r$ such that we keep the daily new cases to be exactly $\theta$, as long as we have $r\le r_H$.
Moreover, we assume that the city starts with a significant number of infected cases and under intense prevention control, 
	and thus the prevention strategy is relaxed when the infection cases drop below the prevention threshold.
Technically we require that the initial infection number $I_0$ satisfies $I_0 \cdot \gamma > \theta$, that is, the number of daily recovered people initially is $I_0 \cdot \gamma$ and it is more than $\theta$. This corresponds to the real-world scenario where the prevention strategies are only employed after the epidemic outbreak already begins, and the adaptive strategies are designed to maintain the infection cases at a low level after the initial infection peak already occurs. Note that the above description on the ideal model only applies to the adaptive prevention strategy. The constant prevention strategy remains the same in the ideal model. We will also use the simulation results from the full model to validate the accuracy of the prediction from the ideal model.

\subsubsection{The Emergence and the Ending of Plateau}

We first investigate the conditions for the presence or the absence of the plateau, in the single-city model.
Let $t_\theta$ be the time when the number of daily recovered people drops to $\theta$, i.e. $I\gamma = \theta$.
We assume that initially $I_0 \gamma > \theta$, and under intense prevention control, thus $I$ will eventually drop to $\theta$.
With Lemma~\ref{lem:monotoner}, we can derive the following result connecting the effective prevention parameter with the plateau phenomenon.
\begin{theorem}
	\label{condition}
	In the ideal single-city model with an adaptive prevention strategy, at any time $t$, if the effective prevention parameter at this time $\tilde{r}_t \ge r_H$, then there will be no plateau after time $t$. On the other hand, if $\tilde{r}_{t_\theta} < r_H$, then plateau will start at $t_\theta$.
\end{theorem}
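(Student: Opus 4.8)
The plan is to run everything through the two facts already in hand: the identity $\tilde r_t = N\gamma/(S\beta)$ from Lemma~\ref{lem:monotoner}, which rewrites the effective reproduction number as $R_t = r/\tilde r_t$, and the monotonicity of $\tilde r_t$ from that same lemma. First I would describe the ideal model's dynamics during the regime in which the adaptive controller is active, i.e.\ holding the daily new infections at $\theta$. In that regime the daily balance for $I$ is (new infections)$\,-\,$(recoveries)$\,=\theta-\gamma I$, whose fixed point is $I^\star=\theta/\gamma$; and the prevention parameter that realizes ``new infections $=\theta$'' is $r=\theta N/(\beta S I)$, which at $I=I^\star$ equals exactly $\tilde r_t$. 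Hence the controller can hold the infection level flat at $\theta/\gamma$ precisely while $\tilde r_t\le r_H$. Note also that at $t_\theta$ one has $I=\theta/\gamma=I^\star$ by the definition of $t_\theta$.

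For the absence part, assume $\tilde r_t\ge r_H$. By Lemma~\ref{lem:monotoner}, $\tilde r_{t'}\ge\tilde r_t\ge r_H$ for all $t'\ge t$, and strictly $\tilde r_{t'}>r_H$ for $t'>t$ as long as $S$ keeps strictly decreasing. Any admissible strategy uses $r\le r_H$, so $R_{t'}=r/\tilde r_{t'}<1$ for every $t'>t$, and therefore the daily change of $I$, which equals $(R_{t'}-1)\gamma I$, is strictly negative: $I$ is strictly decreasing on $(t,\infty)$ with no interval on which it stays level. Because this holds at every later time, the controller can never re-enter the held-at-$\theta$ regime either, so no plateau can form after $t$.

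For the emergence part, assume $\tilde r_{t_\theta}<r_H$. At $t_\theta$ the infection level is $I=\theta/\gamma$, so the controller can set $r=\tilde r_{t_\theta}<r_H$, which is admissible; then the daily change of $I$ is $\theta-\gamma I=0$ and $I$ stays at $\theta/\gamma$. This persists as long as the required $r$ — which stays equal to $\tilde r_t$ while $I$ is pinned at $\theta/\gamma$ — does not exceed $r_H$; by continuity of $S$ (hence of $\tilde r_t$) and Lemma~\ref{lem:monotoner}, $\tilde r_t<r_H$ on a nondegenerate interval $[t_\theta,t_\theta+\varepsilon)$, on which $I\equiv\theta/\gamma$. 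That flat interval is a plateau starting at $t_\theta$; and since before $t_\theta$ the hypothesis $I_0\gamma>\theta$ together with intense prevention keeps $I$ in its declining (possibly rise-then-fall) phase rather than level, $t_\theta$ is indeed where the plateau begins.

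I expect the main obstacle to be not the ODE bookkeeping but making the informal notion of ``plateau'' precise enough to carry the absence claim: one must argue strict monotone decrease of $I$ on all of $(t,\infty)$ — not merely $\dot I\le 0$ — which requires $S$ strictly decreasing there (true whenever there are any new infections) and needs the borderline case $\tilde r_t=r_H$ dispatched by the strict increase in Lemma~\ref{lem:monotoner}. A secondary point to settle cleanly is the reconciliation between the controller's trigger (daily new infections reaching $\theta$) and the identification $I=\theta/\gamma$ at $t_\theta$: in the ideal model these are consistent because once new infections are held at $\theta$ the relation (daily change of $I$)$\,=\theta-\gamma I$ forces stationarity exactly at $I=\theta/\gamma$.
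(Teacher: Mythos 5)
Your argument is correct and follows essentially the same route as the paper's own proof sketch: the absence part uses Lemma~\ref{lem:monotoner} together with $R_t = r/\tilde r_t$ and $r\le r_H$ to force $R_{t'}<1$ for all $t'>t$, and the emergence part has the controller hold daily new infections at $\theta$, balancing recoveries $\gamma I$ at $I=\theta/\gamma$ with $r$ tracking $\tilde r_t$ until it reaches $r_H$. Your treatment is somewhat more careful than the paper's sketch (strict decrease of $S$, the borderline case $\tilde r_t=r_H$, and the explicit formula $r=\theta N/(\beta S I)$), but it is the same proof in substance.
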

\begin{proof}{(Sketch)}
When $\tilde{r}_t \ge r_H$, we know that for any time $t'>t$, $\tilde{r}_{t'} > \tilde{r}_t$ according to Lemma~\ref{lem:monotoner}.
This means that for all time $t'>t$, the actual prevention parameter $r$ as the property that $r \le r_H < \tilde{r}_{t'}$.
Since $\tilde{r}_{t'}$ corresponds to the effective reproduction number 
	$R_{t'} = 1$, this implies that $R_{t'} < 1$ by Equation~\eqref{effectivereproductionnumber}, which in turn implies that
	for all time $t' > t$, the number of new infections any infected people would generate is less than $1$, and the epidemic will keep receding.

Now suppose that $\tilde{r}_{t_\theta} < r_H$.
According to the definition of $\tilde{r}_{t_\theta}$, $I \gamma = \theta$ at time $t_\theta$.
The ideal mode can be adjusted to $r=\tilde{r}_{t_\theta}$ at which the daily infection number balances with the daily recovery number $I \gamma$.
It can always keep the prevention parameter $r$ as $\tilde{r}_t$ until it increases to $r_H$.
During this period, the daily infection and daily recover balance each other, i.e. the plateau.
\end{proof}

The above theorem shows that the relationship of the effective prevention parameter $\tilde{r}_t$ and the upper bound of the prevention parameter range $r_H$ directly
	determine the presence of the plateau.
Note that for a constant prevention strategy with a fixed value $r$, it is clear that it has no plateau because it is essentially the
	same as the standard SIR model.
Therefore, we can obtain the following conclusion on the emergence and the ending of the plateau.
\begin{corollary}[Condition for the emergence of the plateau]
	\label{cor:emergenceplateau}
	In the ideal single-city model, the condition that the plateau will emerge is that we use an adaptive prevention strategy and
	$\tilde{r}_{t_\theta} < r_H$.
\end{corollary}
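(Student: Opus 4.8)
The plan is to derive Corollary~\ref{cor:emergenceplateau} as a direct consequence of Theorem~\ref{condition} and Lemma~\ref{lem:monotoner}, reading the statement as an ``if and only if'': a plateau emerges precisely when (i) the prevention strategy is adaptive and (ii) $\tilde{r}_{t_\theta} < r_H$. The sufficiency direction is essentially a restatement of the second half of Theorem~\ref{condition}, so the real work is the necessity direction, where I would rule out a plateau in each of the two complementary scenarios --- a constant prevention strategy, and an adaptive strategy with $\tilde{r}_{t_\theta} \ge r_H$.

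For sufficiency, I would assume an adaptive strategy with $\tilde{r}_{t_\theta} < r_H$. By the ideal-model hypothesis $I_0\gamma > \theta$, the infection count decreases under intense control until $I\gamma = \theta$ at time $t_\theta$, and Theorem~\ref{condition} then states directly that the plateau begins at $t_\theta$. The one point I would make explicit is that the balancing value $\tilde{r}_{t_\theta}$ lies in the admissible range $[r_L, r_H]$: the upper bound is the hypothesis, and the lower bound holds because during the initial intense phase $r = r_L$ with $I$ strictly decreasing forces $R_t < 1$, i.e.\ $r_L < \tilde{r}_t$ via Equation~\eqref{effectivereproductionnumber}, and monotonicity of $\tilde{r}_t$ (Lemma~\ref{lem:monotoner}) propagates $r_L < \tilde{r}_t \le \tilde{r}_{t_\theta}$ up to $t_\theta$.

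For necessity, suppose a plateau emerges. First, if the strategy were constant with a fixed $r$, the ideal model coincides with the standard SIR model, whose curve $I(t)$ is unimodal and has no extended flat segment (as already noted in the paragraph preceding the corollary); hence a plateau forces the strategy to be adaptive. Second, assuming the strategy is adaptive, I would argue by contradiction that $\tilde{r}_{t_\theta} \ge r_H$ is impossible: for times $t \ge t_\theta$, applying the first half of Theorem~\ref{condition} at $t = t_\theta$ gives no plateau after $t_\theta$; for times $t < t_\theta$, the ideal-model setup keeps the city under intense control with $I$ strictly decreasing from $I_0\gamma > \theta$ down to $\theta$, and the adaptive mechanism has not yet been triggered, so there is no flat segment there either. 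Hence no plateau can occur, contradicting the assumption, so $\tilde{r}_{t_\theta} < r_H$. Pairing the two scenarios yields the stated necessary condition, and together with sufficiency this gives the corollary.

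The main obstacle is not a hard computation but getting the bookkeeping right about what ``the plateau emerges'' actually excludes: one must verify that nothing plateau-like can happen in the initial window $[0, t_\theta)$ (which follows from $I_0\gamma > \theta$ and intense control making $I$ strictly monotone) and that the constant-strategy case genuinely produces a smooth unimodal trajectory with no flat portion. Once these two observations are secured, the result follows by combining them with Theorem~\ref{condition} and Lemma~\ref{lem:monotoner}, with no further work required.
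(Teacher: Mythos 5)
Your proposal is correct and follows essentially the same route as the paper, which presents the corollary as an immediate consequence of Theorem~\ref{condition} together with the remark that a constant strategy reduces to the standard SIR model and hence has no plateau. The extra bookkeeping you supply (the necessity direction, the absence of any flat segment on $[0,t_\theta)$, and the check that the balancing value stays in the admissible range) only makes explicit what the paper leaves implicit.
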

\begin{corollary}[Condition for the ending of the plateau]
	\label{cor:endplateau}
	In the ideal single-city model, given that plateau appears at time $t$, it will end at time $t'>t$ when $\tilde{r}_{t'} \ge r_H$.
\end{corollary}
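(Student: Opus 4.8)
The plan is to obtain Corollary~\ref{cor:endplateau} almost immediately from Theorem~\ref{condition} and Lemma~\ref{lem:monotoner}, together with one short finiteness argument. First I would recall how the plateau is maintained in the ideal model: once the daily new cases reach $\theta$, the prevention parameter is set to $r = \tilde{r}_t$, so that the daily new infections are held exactly at $\theta$ and the number of existing infections stays at $\theta/\gamma$. This is feasible precisely as long as the required value $\tilde{r}_t$ does not exceed the cap $r_H$. By Lemma~\ref{lem:monotoner}, $\tilde{r}_t$ is strictly increasing in $t$, so there is a well-defined first time $t'$ at which $\tilde{r}_{t'} \ge r_H$, and for $t \ge t'$ the balancing value $\tilde{r}_t$ is no longer attainable under the constraint $r \le r_H$.

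Next I would invoke the first half of Theorem~\ref{condition}: since $\tilde{r}_{t'} \ge r_H$, for every later time $t'' > t'$ we have $r \le r_H < \tilde{r}_{t''}$, hence $R_{t''} < 1$ by Equation~\eqref{effectivereproductionnumber}, so the epidemic strictly recedes and no new plateau can form thereafter. Combining this with the previous paragraph identifies $t'$ as exactly the moment the plateau ends: it is sustained on $[t, t')$ because there $\tilde{r}$ stays below $r_H$ and the construction keeps daily infections at $\theta$, and it is genuinely over for all times $\ge t'$ because $R < 1$ from then on and, by monotonicity of $\tilde{r}$, the condition $\tilde{r} \ge r_H$ never reverses.

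The only step with actual content beyond bookkeeping is verifying that $t'$ is finite, i.e. that $\tilde{r}_t$ really does reach $r_H$. During the plateau, $S$ decreases by exactly $\theta$ per day (the daily new infections), so after $k$ days of plateau $S(t+k) = S(t) - k\theta$. Because the plateau appeared, Corollary~\ref{cor:emergenceplateau} gives $\tilde{r}_{t_\theta} < r_H$, i.e. $S(t) > N\gamma/(r_H\beta) > 0$; hence $S$ drops below $N\gamma/(r_H\beta)$ after at most $\lceil (S(t) - N\gamma/(r_H\beta))/\theta \rceil$ days, at which point $\tilde{r}_t = N\gamma/(S\beta) \ge r_H$. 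Thus $t'$ is finite, and this also yields a first (crude) bound on the plateau length, to be sharpened in the subsequent analysis of the plateau's height and duration.

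I do not expect a genuine obstacle here — the statement is essentially a corollary of Theorem~\ref{condition} — so the effort goes into precision rather than difficulty: pinning down the discrete dynamics during the plateau (that new infections are held at exactly $\theta$, so $S$ decreases linearly and $\tilde{r}_t$ strictly increases), and making ``the plateau ends at $t'$'' rigorous in both directions, i.e. maintained up to $t'$ by the ideal-model construction under the feasibility condition $\tilde{r}_t \le r_H$, and irreversibly over at $t'$ by Theorem~\ref{condition} and the monotonicity in Lemma~\ref{lem:monotoner}.
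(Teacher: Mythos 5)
Your proposal is correct and follows essentially the same route as the paper, which presents Corollary~\ref{cor:endplateau} as a direct consequence of Theorem~\ref{condition} (feasibility of holding $r=\tilde{r}_t$ while $\tilde{r}_t<r_H$, irreversible recession once $\tilde{r}_{t'}\ge r_H$ via Lemma~\ref{lem:monotoner}). Your added finiteness check, that $S$ drops by $\theta$ per day during the plateau so $\tilde{r}_t$ reaches $r_H$ in finite time, is a sound extra step that in fact anticipates the exact length formula $T=(S_B-S_E)/\theta$ of Theorem~\ref{thm:length}.
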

Corollary~\ref{cor:emergenceplateau} states that plateau will appear when the effective prevention parameter is still less than $r_H$ and the daily recovery number hit the prevention threshold, while Corollary~\ref{cor:endplateau} states that the plateau will end once the effective prevent intensity reaches $r_H$.
We defer the discussion on the actual calculation of the condition $\tilde{r}_{t_\theta} < r_H$ to Section~\ref{sec:length} when we study the length of the plateau.

We use simulation results shown in Figure~\ref{Ircomparison} to validate these conditions in our actual model.
Figure~\ref{Istate} shows the epidemic development under two adaptive prevention strategies.
The first strategy, shown in the blue curve, has a clear plateau period.
In Figure~\ref{rstate}, we show the corresponding changes of the effective prevention parameter and the actual prevention parameter.
For the blue curve, Figure~\ref{rstate} shows the corresponding changes in the actual prevention parameter $r$, which fluctuates between $r_L=6$ and $r_H=10$ for a period of time.
During this period, the yellow curve, corresponding to the effective prevention parameter $\tilde{r}_{t,1}$ of the first strategy increases gradually from $8$ to $10$.
Since $\theta=40$ and $\gamma=0.12$, when y-axis $I$ in Figure~\ref{Istate} is $40/0.12 \approx 333$,
	the corresponding effective prevention parameter is around $8$, which is less than $r_H=10$, consistent with Corollary~\ref{cor:emergenceplateau} that the plateau would appear.
When $\tilde{r}_{t,1}$ increases to $10 = r_H$, we can see that the plateau ends in the corresponding blue curve on the left.
For the second strategy, the red curve in Figure~\ref{Istate} without plateau.
The actual prevention parameter $r$ keeps at the level $r_H=8$ in this case (the red curve in Figure~\ref{rstate}), while the effective 
	prevention parameter $\tilde{r}_{t,2}$ starts at $8$ and increases slowly above $8$ (the purple curve in Figure~\ref{rstate}).
This is also consistent with Corollary~\ref{cor:emergenceplateau}, meaning that when $\tilde{r}_{t_\theta} \ge r_H=8$, there will be no plateau in this case.

\begin{figure}[t]
	\centering
	\vspace{-0.4cm}
	\subfloat[\scriptsize The change curve of I]{\includegraphics[width=0.48\textwidth]{./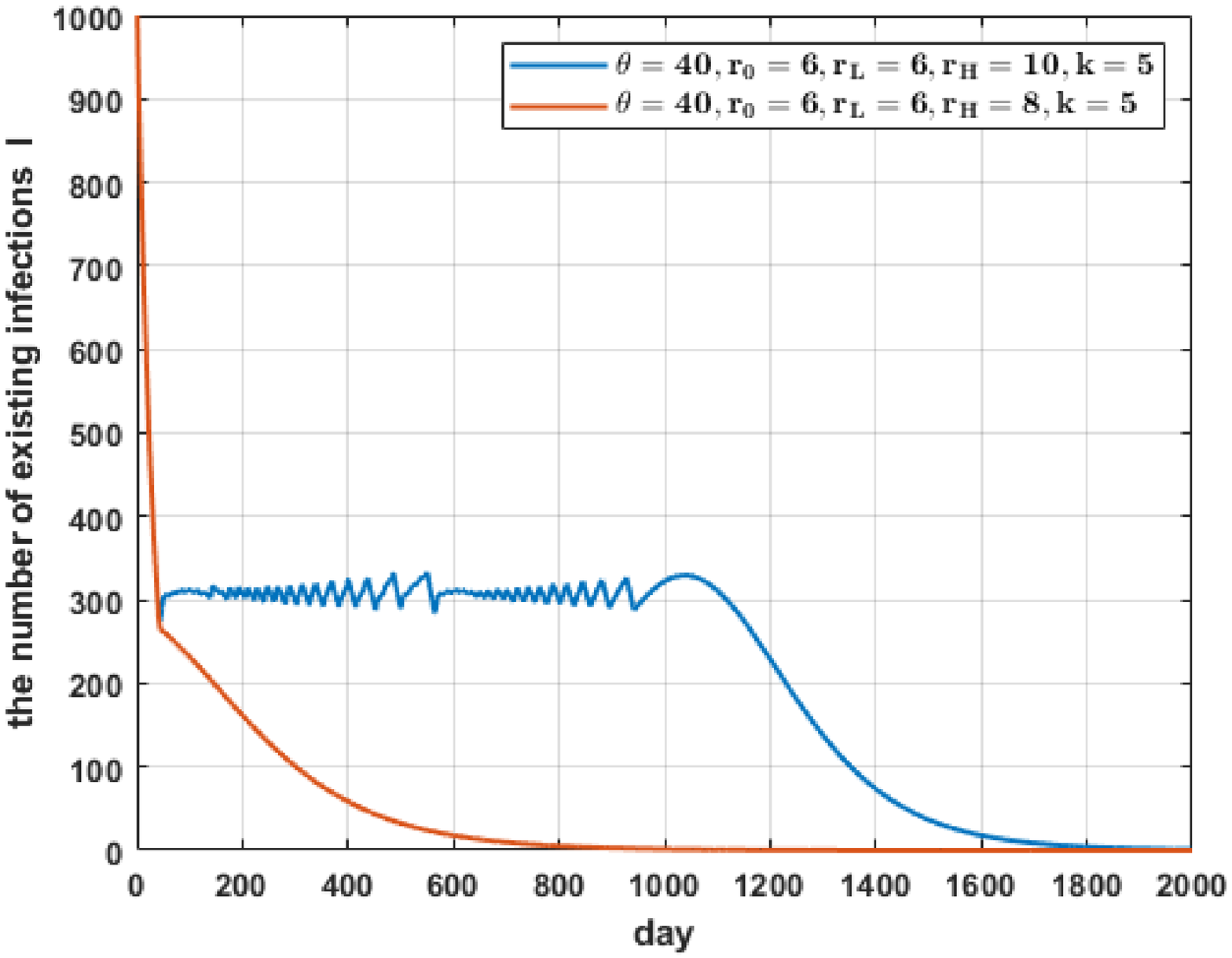}\label{Istate}}
	\subfloat[\scriptsize The change curve of r]{\includegraphics[width=0.48\textwidth]{./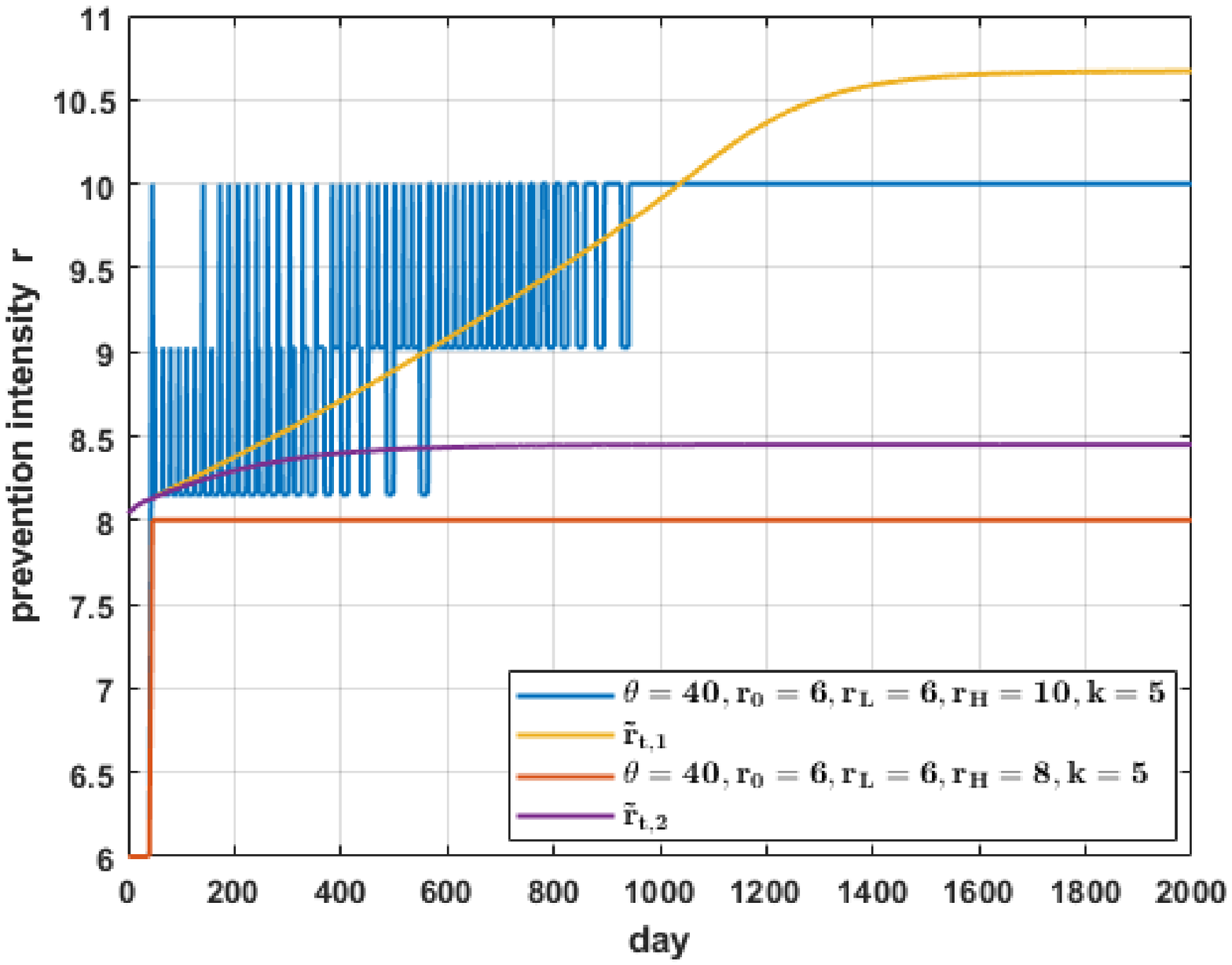}\label{rstate}}
	\caption{The relationship between the plateau and the prevention parameter. $\tilde{r}_{t,1}$ expressed the changes in the effective prevention parameter of the first prevention strategy, while $\tilde{r}_{t,2}$ stands for the second prevention strategy.
	}
	\label{Ircomparison}
	\vspace{-0.4cm}
\end{figure}

\subsubsection{The Height of the Plateau}

Although the number of infected people in the plateau has small fluctuations, it is basically maintained at a same value.
We call the value around which the number of existing infections fluctuates is the height of the plateau, and denote it as $I_P$.
In the ideal model, the height of the plateau $I_P$ is when the number of existing infected people does not change, that is, when the
	effective reproduction number $R_t=1$.
Using the relations we derived before, we can obtain the following result on $I_P$.
\begin{theorem} \label{thm:height}
In the ideal single-city model, the height of the plateau is given as follows:
\begin{equation*}
 I_{P}=\frac{\theta}{\gamma}, 
\end{equation*}
where $\theta$ is the prevention threshold and $\gamma$ is the recovery rate.
\end{theorem}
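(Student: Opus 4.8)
The plan is to pin down $I_P$ as the value of $I$ at which the existing‑infection count holds steady, and then to read that value off from the two constraints the ideal model enforces on a plateau day. I would start from the discrete SIR bookkeeping: on any given day the new intra‑city infections number $r\beta\frac{S}{N}I$ and the recoveries number $\gamma I$, so the day‑over‑day change is $\Delta I = r\beta\frac{S}{N}I - \gamma I = \gamma I\,(R_t-1)$, using the expression for $R_t$ in Equation~\eqref{effectivereproductionnumber}. Hence $\Delta I = 0$ exactly when $R_t = 1$, i.e.\ when the actual prevention parameter equals the effective one, $r=\tilde r_t$. This steady regime is precisely the plateau, so $I_P$ is the common value of $I$ carried throughout it.

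Next I would invoke the ideal‑model dynamics established in the proof of Theorem~\ref{condition}: from $t_\theta$ onward the adaptive strategy holds the daily new infections equal to the threshold $\theta$ by setting $r=\tilde r_t$, and it can do so for as long as $\tilde r_t\le r_H$ --- a window that is nonempty precisely under the emergence condition of Corollary~\ref{cor:emergenceplateau} and that closes when $\tilde r_{t'}\ge r_H$ by Corollary~\ref{cor:endplateau}. On such a day, substituting $R_t=1$ into Equation~\eqref{effectivereproductionnumber} gives $r\beta\frac{S}{N}=\gamma$, so the daily new infections equal $r\beta\frac{S}{N}I=\gamma I$; equating this with the pinned value $\theta$ forces $\gamma I_P=\theta$, that is, $I_P=\theta/\gamma$. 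Equivalently, during the plateau $\Delta I = \theta-\gamma I$, whose fixed point is $\theta/\gamma$.

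The step I expect to need the most care --- the real content beyond the arithmetic --- is the \emph{consistency and stability} check: one must verify that ``$R_t=1$'' and ``daily new cases $=\theta$'' can be met simultaneously by a single admissible $r\in[r_L,r_H]$ for the whole duration of the plateau, and that $I$ actually rests at $\theta/\gamma$ rather than merely passing through it. Both follow from monotonicity of $\tilde r_t$ (Lemma~\ref{lem:monotoner}) together with the feedback structure of the ideal model: as $S$ decreases, $\tilde r_t$ rises, the actual $r$ tracks it upward to keep new infections at $\theta$, and $\theta/\gamma$ is the \emph{stable} equilibrium of $\Delta I=\theta-\gamma I$, so perturbations are damped; once $\tilde r_t$ would exceed $r_H$ the parameter saturates at $r_H<\tilde r_t$, $R_t<1$, and $I$ resumes its decline, consistent with Corollary~\ref{cor:endplateau}. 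Beyond this, the theorem is essentially a fixed‑point computation dressed in the SIR bookkeeping, so I would not anticipate further technical difficulty; I would close by noting that $\theta/\gamma$ matches the simulation readings around Figure~\ref{Ircomparison} (with $\theta=40$, $\gamma=0.12$ one gets $I_P\approx 333$), which sanity‑checks the idealization against the full model.
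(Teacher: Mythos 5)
Your argument is correct and follows essentially the same route as the paper's proof: in the plateau the ideal model pins daily new infections at $\theta$ while $R_t=1$ forces $r\beta S/N=\gamma$, so the new-infection rate equals $\gamma I_P$ and hence $I_P=\theta/\gamma$. The added consistency and stability discussion (admissibility of $r\in[r_L,r_H]$ via Lemma~\ref{lem:monotoner} and the fixed point of $\Delta I=\theta-\gamma I$) is a reasonable supplement but does not change the substance of the derivation.
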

\begin{proof}
In the plateau we have $R_t=1$, and the newly infected people per day is exactly equal to the prevention threshold $\theta$ (in the ideal model).
According to the SIR model, the newly infected people per day is $\frac{r \beta IS}{N}$ (excluding the number of recovered people).
Thus in the plateau,
\begin{align}
\theta&=\frac{r\beta I_P S}{N}. \label{eq:theta-di}
\end{align}
From Equation~\eqref{effectivereproductionnumber}, we know that $\frac{r\beta S}{N} = \gamma$.
Plugging it into Equation\eqref{eq:theta-di}, we have $I_P = \theta / \gamma$.
\end{proof}

Theorem~\ref{thm:height} directly shows that the height of the plateau has positive linear relationship with the prevention threshold $\theta$, and it is not
	related to the other parameters such as the range of prevention strategy $[r_L, r_H]$ or the speed of prevention $k$.
This result is consistent with our simulation study findings.
In particular, comparing with Figure~\ref{threshold}, 
	when we plug in $\theta = 10, 20, 40, 80$ with $\gamma = 0.12$, we obtain the theoretical heights as 
	$I_P = 83, 167, 333, 667$, while the experimental averages we obtain the simulation results
	are $77,154,308,622$ respectively, indicating that the theoretical
	results match with the simulation results reasonably well.
The reason that the simulation results are consistently lower than the theoretical predictions is because in our actual model, we have an 
	asymmetric strategy adjustment: we need 14 consecutive days of new cases dropping below $\theta$ to relax the prevention, while we only need
	one day of new cases rising above $\theta$ to strengthen the prevention.
This asymmetric control push the actual height lower than our theoretical prediction from the ideal model.

%
%
%
%

\subsubsection{The Length of the Plateau}
\label{sec:length}

After obtaining the height of the plateau, we can further calculate the length of the plateau by the Area Method.
We denote the length of the plateau, i.e. the time period from when the plateau begins to the plateau ends, as $T$.
It is clear that $T \cdot I_P$ is the area covered of the plateau.
We just need to calculate the area underlying the plateau to get $T$.
The result is given in the following theorem.

\begin{theorem} \label{thm:length}
In the ideal single-city model, assuming that the number of new infections falls to $\theta$ is the beginning of the plateau, the length of the plateau is given as follows:
\begin{equation}
 T=\frac{S_B - S_E}{\theta}, \label{eq:length}
\end{equation}
where $S_E$ is the number of susceptible people when the plateau ends, and $S_E=\frac{\gamma N}{\beta r_H}$; $S_B$ is the number of susceptible people when the plateau begins, and it is the solution of the following equality:
\begin{align}
S=(S_0+I_0)- I_P +\frac{N\gamma}{r_0\beta}\ln{\frac{S}{S_0}}, \label{eq:I-S}
\end{align}
where $S_0, I_0$ are the number of initial susceptible and infected people, and $r_0$ is the initial prevention parameter in the range $[r_L, r_H]$.
\end{theorem}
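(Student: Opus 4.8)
The plan is to obtain the three ingredients $S_E$, $S_B$, and $T$ in turn and then combine them. The easiest is $S_E$: by Corollary~\ref{cor:endplateau} the plateau ends exactly at the time $t'$ when the effective prevention parameter hits its ceiling, $\tilde r_{t'}=r_H$, so substituting the closed form $\tilde r_{t}=N\gamma/(S\beta)$ from Equation~\eqref{eq:effectiveprevention} and solving for $S$ yields $S_E = N\gamma/(\beta r_H)$ at once.

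Next I would determine $S_B$. Before the plateau the city sits under fixed intense control with prevention parameter $r_0$, so on that interval the dynamics are exactly the standard SIR system with constant $r_0$. Dividing $\frac{dI}{dt}$ by $\frac{dS}{dt}$ eliminates time and leaves $\frac{dI}{dS} = -1 + \frac{N\gamma}{r_0\beta S}$; integrating gives the conserved quantity $I + S - \frac{N\gamma}{r_0\beta}\ln S$, whose value is fixed by the initial data $(S_0,I_0)$. At the onset of the plateau the number of infected equals $I_P = \theta/\gamma$ --- this is both the defining property of $t_\theta$ ($I\gamma=\theta$) and the plateau height from Theorem~\ref{thm:height} --- so equating the conserved quantity at $(S_B, I_P)$ with its value at $(S_0,I_0)$ is precisely Equation~\eqref{eq:I-S}. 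Since $I_0\gamma>\theta$ forces $I_0>I_P$, and Lemma~\ref{lem:monotoner} makes $\tilde r_t$ increase so that the controlled epidemic has $R_t=r_0/\tilde r_t<1$ throughout the pre-plateau phase, the trajectory does descend to $I=I_P$; the corresponding susceptible value is $S_B$, the relevant root of~\eqref{eq:I-S}.

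Finally $T$ follows from the Area Method. On the plateau the ideal model pins the daily new infections at $\theta$ while recoveries are $\gamma I$, so $\frac{dI}{dt} = \theta - \gamma I$, whose equilibrium $\theta/\gamma = I_P$ is exactly the value $I$ takes when the plateau begins; hence $I\equiv I_P$ throughout, and the area under the $I$-curve over the plateau is $T\cdot I_P$. Adding the $S$- and $I$-equations gives $\frac{d(S+I)}{dt}=-\gamma I$, and integrating this over the plateau yields $(S_E+I_P)-(S_B+I_P)=-\gamma\int I\,dt=-\gamma\,T I_P=-\theta T$, so $S_B-S_E=\theta T$, which is Equation~\eqref{eq:length}. (Equivalently and more elementarily: each of the $\theta$ new infections per day moves one person out of $S$, and $S$ changes through no other channel, so the net drop $S_B-S_E$ of susceptibles over the plateau equals $\theta T$.) Assembling the three formulas proves the theorem.

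The step I expect to need the most care is the second one --- not the integration, which is the textbook SIR first integral, but justifying that the plateau actually begins at the first instant $I$ reaches $\theta/\gamma$ and that the ideal relaxation rule can hold $I$ flat there. This rests on the emergence condition $\tilde r_{t_\theta}<r_H$ of Corollary~\ref{cor:emergenceplateau} together with $r_L\le r_0<\tilde r_{t_\theta}$, so that at $t_\theta$ the strategy has exactly enough slack to loosen $r$ to the value giving $R_t=1$ just as $I$ first reaches the plateau height; the rest is routine bookkeeping.
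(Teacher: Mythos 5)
Your proposal is correct and follows essentially the same route as the paper: $S_E$ from the end-of-plateau condition $\tilde{r}_{t_E}=r_H$, $S_B$ from the SIR first integral with $r=r_0$ evaluated at $I=I_P=\theta/\gamma$, and $T$ from an area/balance argument over the plateau. The only cosmetic differences are that you derive the first integral directly (the paper cites it) and you obtain $S_B-S_E=\theta T$ by integrating $\frac{\mathrm{d}(S+I)}{\mathrm{d}t}=-\gamma I$ (equivalently $\frac{\mathrm{d}S}{\mathrm{d}t}=-\theta$) rather than via the paper's counting of the area under the $I$-curve as $(N-S)/\gamma$ divided by the plateau height; these are the same identity.
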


\begin{proof}{(Sketch)}
In the curve with $I$ as y-axis and time as x-axis, the area under curve (AUC)  by time $t$ is the cumulative existing infected people 
	 by time $t$.
On average, each infected individual takes $1/\gamma$ time to recover, so each infected individual is counted $1/\gamma$ times.
Therefore, the AUC by time $t$ is the total number of once infected people multiplied by $1/\gamma$, which is $(I+R) /\gamma$.
Since $S+I+R = N$, so we have the AUC by time $t$ is $(N-S)/\gamma$.
Let $t_B, t_E$ be the time when the plateau begins and ends, and correspondingly $S_B$ and $S_E$ are the numbers of susceptible people
	at time $t_B$ and $t_E$ respectively.
Then we have the AUC between $t_B$ and $t_E$ is $(N-S_E)/\gamma - (N-S_B)/\gamma = (S_B - S_E) / \gamma$.
Since the height of the plateau is $\theta/\gamma$ by Theorem~\ref{thm:height}, we know the length of the plateau
	$T = \frac{(S_B - S_E) / \gamma}{\theta/\gamma} = \frac{S_B - S_E}{\theta}$.
Thus Equation\eqref{eq:length} holds.

We now show how to derive $S_B$ and $S_E$.
First, for $S_E$, we know from Corollary~\ref{cor:endplateau} that when the plateau ends, $\tilde{r}_{t_E} = r_H$, and by the definition
	of $\tilde{r}_{t_E}$, we know $R_{t_E} = 1$.
 Plugging these values into Equation~\eqref{effectivereproductionnumber}, we obtain that $S_E = \frac{\gamma N}{\beta r_H}$.

To derive the result for $S_B$, we rely on the equation below connecting $I$ and $S$ in the SIR model~\cite{Zhen2004}:
\begin{align*}
I=(S_0+I_0)-S+\frac{N\gamma}{r\beta}\ln{\frac{S}{S_0}}, 
\end{align*}
Before the plateau appears, the prevention parameter $r$ is fixed to a value $r_0$.
When the plateau appears, $I$ is equal to $I_P$.
Thus, when plugging $I=I_P$ and $r=r_0$ into the above equation, we obtain Equation\eqref{eq:I-S}, and $S_B$ is the solution of this equation.

\end{proof}

Note that $S_B$ derived from Equation~\eqref{eq:I-S} can be used to test condition $\tilde{r}_{t_\theta} < r_H$ in 
	Corollary~\ref{cor:emergenceplateau}:  $\tilde{r}_{t_\theta} = \frac{N \cdot \gamma}{S_B \cdot \beta} $ according to Equation\eqref{eq:effectiveprevention}.
	
We now validate Theorem~\ref{thm:length} with the experimental results.
We compare our theoretical prediction on the plateau length with the experimental results from Figure~\ref{threshold}, 
	with the initial condition $IC = (200000,0.5\%)$ and parameters $r_0=6, r_L=6, r_H=10, k=5$, $\theta$ from $10$ to $80$.
Table~\ref{tab:height} summarizes the comparison result.
We can see that the theoretical and experimental results match reasonably well.
Moreover, the estimation error is mainly from the error in the estimation of the height --- if we replace the theoretical heights with
	the experimental heights, we get much closer estimates of $3920, 1984, 1002, 509$ respectively, which means our theoretical AUC prediction
	is quite accurate.

\begin{table*}[t]
    \setlength{\abovecaptionskip}{2mm}
	\centering
    \caption{\small The experimental and theoretical heights and lengths of the plateau for settings in Figure~\ref{threshold}.}
	\small
	\label{tab:height}
	\setlength{\tabcolsep}{10mm}
	\begin{threeparttable}				
		\begin{tabular}{l|c|c|c|c}
			\toprule
			\small $\theta$ &  \small 10  &  \small 20  &  \small 40 &  \small 80  \\
			\midrule 
			I (experimental)        & 77 & 154 & 308 & 622 \\ 	
			\midrule 
			I (theoretical)        & 83 & 167 & 333 & 667 \\
			\midrule 
			days (experimental)        & 3,929 & 1,975 & 1,002 & 514 \\ 	
			\midrule 
			days (theoretical)        & 3,637 & 1,830 & 927 & 475 \\ 	
			\bottomrule
		\end{tabular}
	\end{threeparttable}
\end{table*}

\section{Conclusion and Discussion}
In this paper, we study an epidemic development phenomenon: plateau. This phenomenon is formed under the change of prevention strength of adaptive prevention strategies.  In the traditional infectious disease model, it is difficult to find the existence of the plateau.  We investigate empirically from both simulations and real-world data on the plateau phenomenon, in both single-city model and multi-city model, and further provide theoretical analysis in the single-city model. In the two years since the outbreak of COVID-19, we found more and more countries have entered a plateau. It is difficult to achieve a complete regression of the epidemic in a short term. It may be a better solution to control the epidemic through adaptive prevention strategies. In the single-city model, we obtain a number of results on the emergence and the characteristics of the plateau phenomenon. Our results in the multi-city model suggest that all regions are interconnected and local prevention strategies would have effects on other regions as well. The plateau phenomenon well explains the persistence of the current COVID-19 epidemic and provides a new solution for countries still suffering from the epidemic.

There could be some future work along this direction. On the empirical side, one could investigate in more detail the real-world plateau phenomenon and their correlations with specific prevention strategies used. On the theoretical side, the multi-city model is much more challenging due to the complicated population mobility dynamics. It would be very important and beneficial to obtain some insightful results in the multi-city model. We hope that our findings could help the understanding of the COVID-19 pandemic (and other epidemics as well) in connection with different prevention strategies, and assist policymakers in predicting the effect of their prevention strategies. At present, our findings are being applied to the epidemic prevention work in Shenzhen.

%
%

\section{Materials and Methods}
\subsection{Experimental Hypothesis}
We base the choice of our parameters on the research that has been made on the COVID-19.

A person stays infected by COVID-19 for 8.4 days on average \cite{wu2020nowcasting}. The recovery rate is thus
defined as $\gamma= 1/8.4 = 0.12$.
For the infection rate $\beta=0.015$, we use the most common indicator in the epidemic research, the reproduction number $R$ to calculate, as shown in Supplementary Section A in detail.

We used the epidemic data provided by the Information Center of Shenzhen Public Security Bureau to better estimate the number of contacts --- prevention parameter $r$.
We estimate the number of contacts among people $r=24$ under normal conditions.
With the assistance of the Information Center of Shenzhen Public Security Bureau, the population mobility ratio $p$ is estimated according
to the population mobility data of Shenzhen in 2019 and 2020.
During the outbreak of COVID-19 in 2020, Shenzhen implemented strict restrictions on population mobility, and the proportion of population mobility was around $5\%$.
In the same period in 2019, the proportion of people moving fluctuated between $10\%$ and $20\%$.
In this paper, it is assumed that the proportion of population mobility under normal conditions is $p=20\%$.

\subsection{Epidemic Model with Prevention Strategies}
We incorporate the constantly changing prevention measures as well as the population mobility between regions into the SIR model.
For the COVID-19 with a latent period, we do not choose the SEIR model, because using the SIR model in the process of studying the prevention strategy can simplify the analysis process, and is no different from the SEIR model.

To incorporate the prevention strategies and their dynamic changes into the model, we add a prevention parameter $r$, which represents the number of people that individuals will contact per unit of time.
Cities dynamically adapt their prevention measures based on the current infection data --- when the new infection cases
	exceed expectations the prevention measures intensify, and when the number of new cases drops, the prevention measures are relaxed.
We model this scenario by setting a prevention threshold and dynamically changing the prevention parameter $r_i$ and 
	the fraction of the floating population $p_i$
	based on whether the current infection cases are below or above the prevention threshold.
Technically, we have three main factors for deciding and changing the prevention measures: prevention threshold, the range of changing prevention parameter, and the reaction speed.
Based on the variability of parameters, it is divided into constant prevention strategy and adaptive prevention strategy.

We extend the single-city model to multiple cities that $\beta$ and $\gamma$ are the same across cities since they model the intrinsic characteristics of the virus.
The prevention strategies are different across cities, modeling different prevention measures taken among the cities.
In order to ensure the dynamic balance of population mobility in multiple cities, we put forward a reasonable rule of population mobility and combine it with the external environment $EE$, so that population mobility can be added to the model.
Model details and code can be found in Supplementary Section B.

\bibliographystyle{unsrt}  

\bibliography{references}




\newpage

\appendix{
\section*{Supplementary Materials}
\section*{Section A: Hypothetical Basis}
\label{setting}
\subsection*{Infection Rate and Recovery Rate}
A person stays infected by COVID-19 for 8.4 days on average \cite{wu2020nowcasting}. The recovery rate is thus
defined as $\gamma= 1/8.4 = 0.12$.
For the infection rate $\beta$, we use the most common indicator in the epidemic research, the reproduction number $R$ to calculate.

The reproduction number $R$ can be divided into basic reproduction number and effective reproduction number.
The basic reproduction number represents the average number of people infected in a group of all susceptible people before an infected person recovers, expressed as $R_0$.
The estimation of the basic reproduction number is generally chosen in the early stage of the outbreak, as the proportion of susceptible people in the population is close to $1$.
The effective reproduction number is different from the basic reproduction number, and it
	represents the average number of people infected at time $t$ with a certain proportion of the susceptible population before an infected person recovers, expressed as $R_t$.
The effective reproductive number $R_t$ will change over time, because the proportion of susceptible people will change with the development of the epidemic.
When the effective reproduction number $R_t<1$, it indicates that the epidemic is receding.
The time from infection to the recovery is an infection cycle, which is $\frac{1}{\gamma}$ days.
Before recovering, the infected person will come into contact with $r$ individuals every day.
Among these people, only susceptible people will be infected, and the probability of each susceptible individual
	getting infected is $\beta$.
Therefore, the basic reproductive number $R_0$ and the effective reproductive number $R_t$ have the relationship with infection rate $\beta$, recovery rate $\gamma$ and prevention parameter $r$ as shown in Equation~\eqref{basicreproductionnumber} and Equation~\eqref{effectivereproductionnumber}:

\begin{align}
\label{basicreproductionnumber}
R_0 &= \frac{r\beta}{\gamma}.
\end{align}
We can get the infection rate $\beta$ of COVID-19 through the basic reproduction number $R_0$ of SARS-CoV-2 and the prevention parameter $r$ as in Equation~\eqref{basicreproductionnumber}.

Before calculating the infection rate $\beta$, it is necessary to know the prevention parameter $r$, or the 
	number of human contacts per day.
As discussed before $r$ will change according to the prevention measures.
The basic reproduction number $R_0$ should corresponds to the situation of the early stage of the outbreak before prevention.
At this time, the proportion of susceptible people is close to 1, and the number of contacts with the population is at the normal level.
In order to better estimate the number of human contacts, we used the epidemic data provided by the Information Center of Shenzhen Public Security Bureau.
In Shenzhen, for people infected with COVID-19, the Public Security Bureau will track down the people who have been in contact with him during the incubation period and conduct medical observation to avoid further spreading.
The Public Security Bureau records the number of contacts tracked every day and refers to the number of people obtained as the number of medical observers.
Based on the number of daily medical observations, divided by the product of the number of new diagnoses per day and the incubation period ($3.0$ days), we estimate the normal number of contacts among people $r=24$.
In order to make the results closer to the normal number of human contact, the data from the early stage of the epidemic (the first 5 days of the outbreak in Shenzhen) were selected in the experiment.

There are some estimates of the basic reproduction number $R_0$ in the articles about COVID-19, and the results are all around 3~\cite{wu2020nowcasting, hao2020reconstruction, zhou2020preliminary, cao2020incorporating, de2020epidemiological}.
Assuming that the basic reproduction number of SARS-CoV-2 $R_0=3$, combined with the recovery rate $\gamma=0.12$ and the prevention parameter $r=24$, the infection rate of COVID-19 can be obtained as $\beta=0.015$ from Equation~\eqref{basicreproductionnumber}.

\subsection*{The Number of Population Contacts and the Proportion of Population Mobility}
\label{randp}
As derived above, $r=24$ is the number of contacts without prevention.
The prevention strategies would change $r$.
Only when the effective reproduction number $R_t < 1$, the epidemic begins to subside, and the number of infections declines.
We call the number $r$ that corresponds to $R_t=1$ as $\tilde{r}_t$, the effective prevention parameter.
Based on Equation~\eqref{effectivereproductionnumber} and the 
	calculated infection rate and recovery rate from above, we can obtain that 
	the number of infections will decrease immediately when the prevention parameter $r<8$.

With the assistance of the Information Center of Shenzhen Public Security Bureau, the population mobility ratio $p$ is estimated according
to the population mobility data of Shenzhen in 2019 and 2020.
During the outbreak of COVID-19 in 2020, Shenzhen implemented strict restrictions on population mobility, and the proportion of population mobility was around $5\%$.
In the same period in 2019, the proportion of people moving fluctuated between $10\%$ and $20\%$.
In this paper, it is assumed that the proportion of population mobility under normal conditions is $p=20\%$.

\subsection*{Initial Condition}
In the simulation experiment, the total population and the proportion of initial infected people in the city are called initial conditions.
Different initial conditions will appear in the single-city model and the multi-city model.
An initial condition can be expressed as $IC = (N,\frac{I_0}{N})$, where $N$ represents the urban population and $I_0$ represents the initial number of people infected.

\section*{Section B: Model Details}
\label{modeldetails}
Traditional epidemic models such as the SIR model do not model behavioral changes in the population that could change the virus propagation dynamics.
However, as we have seen with the COVID-19 epidemic, 
	people change their behaviors due to the epidemic either voluntarily or by government restrictions, such as wearing masks, reduce traveling and gathering when the epidemic is getting severe, until the epidemic situation
	is getting better.
Moreover, different geographic regions could impose different prevention strategies, and it is further complicated by the cross-regional population movement.

In this paper, we incorporate the constantly changing prevention measures as well as the population mobility between regions into the SIR model~\footnote{Code is available at https://github.com/Complex-data/Plateau}.
For the COVID-19 with latent period, we do not choose the SEIR model, because using the SIR model in the process of studying the prevention strategy can simplify the analysis process, and is no different from the SEIR model.
For simplicity, we refer regions as cities, but in reality they could represent either cities or provinces, states, counties, or communities.

\subsection*{SIR-based Single-city Model}

Our single-city model is based on the SIR model~\cite{newman10}, in which each individual goes through the states of
	being susceptible (S), infected (I), and recovered(R).
The standard SIR model is governed by two parameters --- infection rate $\beta$ and recovery rate $\gamma$.
To incorporate the prevention strategies and its dynamic changes into the model, we further add a prevention parameter $r$ into the model, which represents the number of people that individuals will be exposed to on time unit.
Thus, in our model $\beta$ is interpreted as the probability that when a susceptible individual interacts with 
	an already infected individual in a time unit, the susceptible individual gets infected from the latter,
	and $\gamma$ is the probability that an infected individual recovers from the disease in a time unit, while
	$r$ is the number of individuals one infected individual would interact with in one time unit.

It is important to remark that in our model, $\beta$ and $\gamma$ are the intrinsic properties of the virus, and it is not affected by
	the prevention measures, while $r$ is affected by the prevention measures --- $r$ will be smaller if the city imposes higher level
	of prevention measures and people exercise social distancing, and $r$ is larger if the city relaxes its prevention measures.
The combined $\frac{r\beta}{N}$ would correspond to the $\beta$ parameter in the standard SIR model.
Then, the differential equations of the single-city model are as follows:

\begin{align}
\frac{\mathrm{d}S}{\mathrm{d}t}& =-\frac{r \beta IS}{N},  \notag\\
\frac{\mathrm{d}I}{\mathrm{d}t}& =\frac{r \beta IS}{N}- \gamma I, \label{eq:single} \\
\frac{\mathrm{d}R}{\mathrm{d}t}& =\gamma I,  \notag
\end{align}
where $S$ is the susceptible population, $I$ is the infected population, $R$ is the recovered population, $N$ is the total population of the city, and $t$ is the time. 
Note that $S, I, R$ are considered variables that change over time, while $N$ remains the same over time.
The change of $S$ in a time unit $\mathrm{d}t$ due to the fact that 
	each infected individual meets $r$ individuals, among them a fraction of $S/N$ are susceptibles, and thus $\beta$ fraction of them
	will be infected and change from $S$ state to $I$ state, and since we have $I$ infected individuals, the total decrease on $S$
	is $\frac{r \beta IS}{N}$.
The change of $R$ is simply because there are currently $I$ infected individuals and $\gamma$ fraction of them would be recovered
	in each time unit $\mathrm{d}t$, and the change of $I$ is the combined effect of $S$ and $R$.

\begin{figure*}[!tpb]
	\centering
	\subfloat[\scriptsize]{\includegraphics[width=0.3\textwidth]{./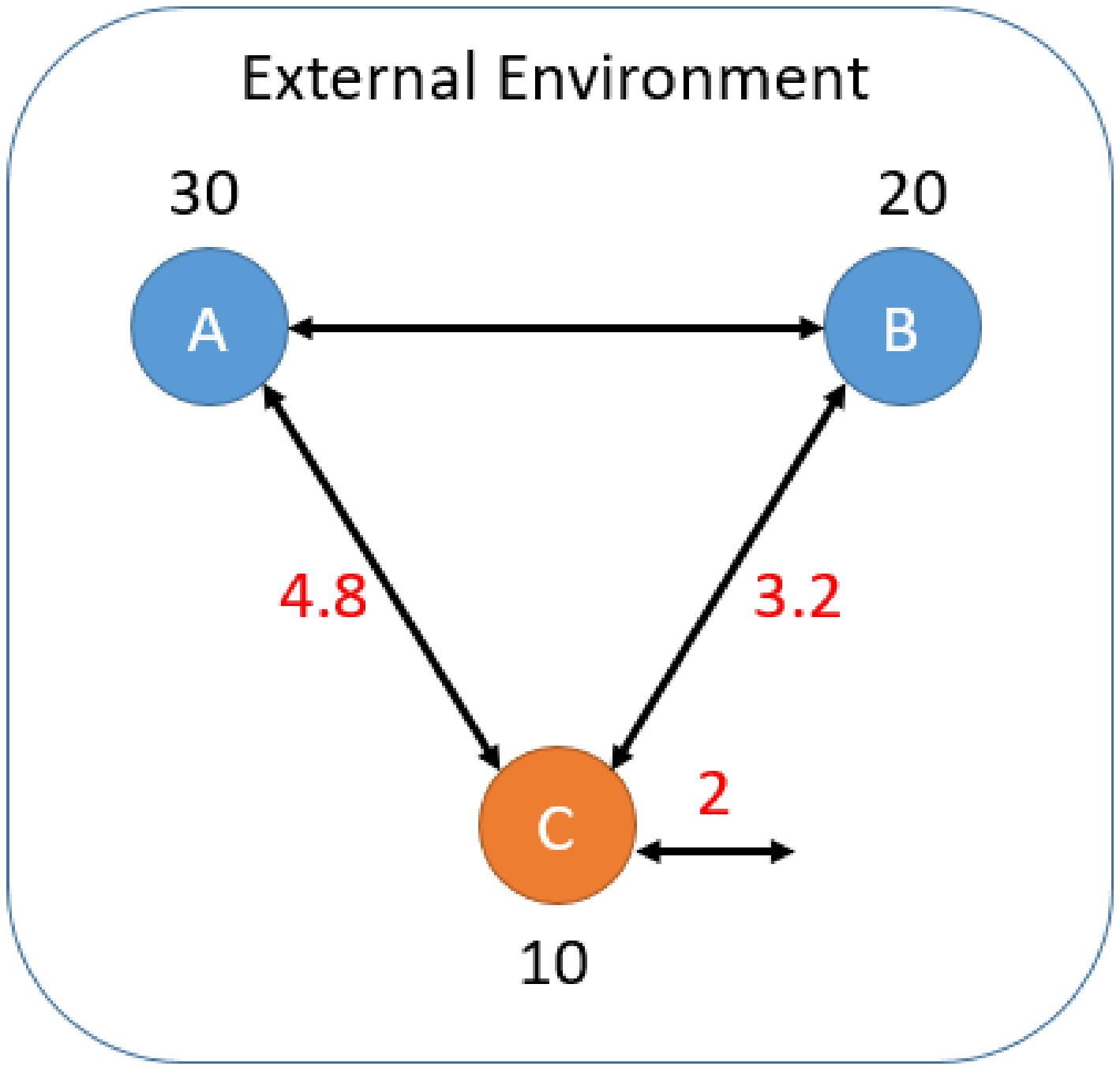}\label{pa}}
	\subfloat[\scriptsize]{\includegraphics[width=0.3\textwidth]{./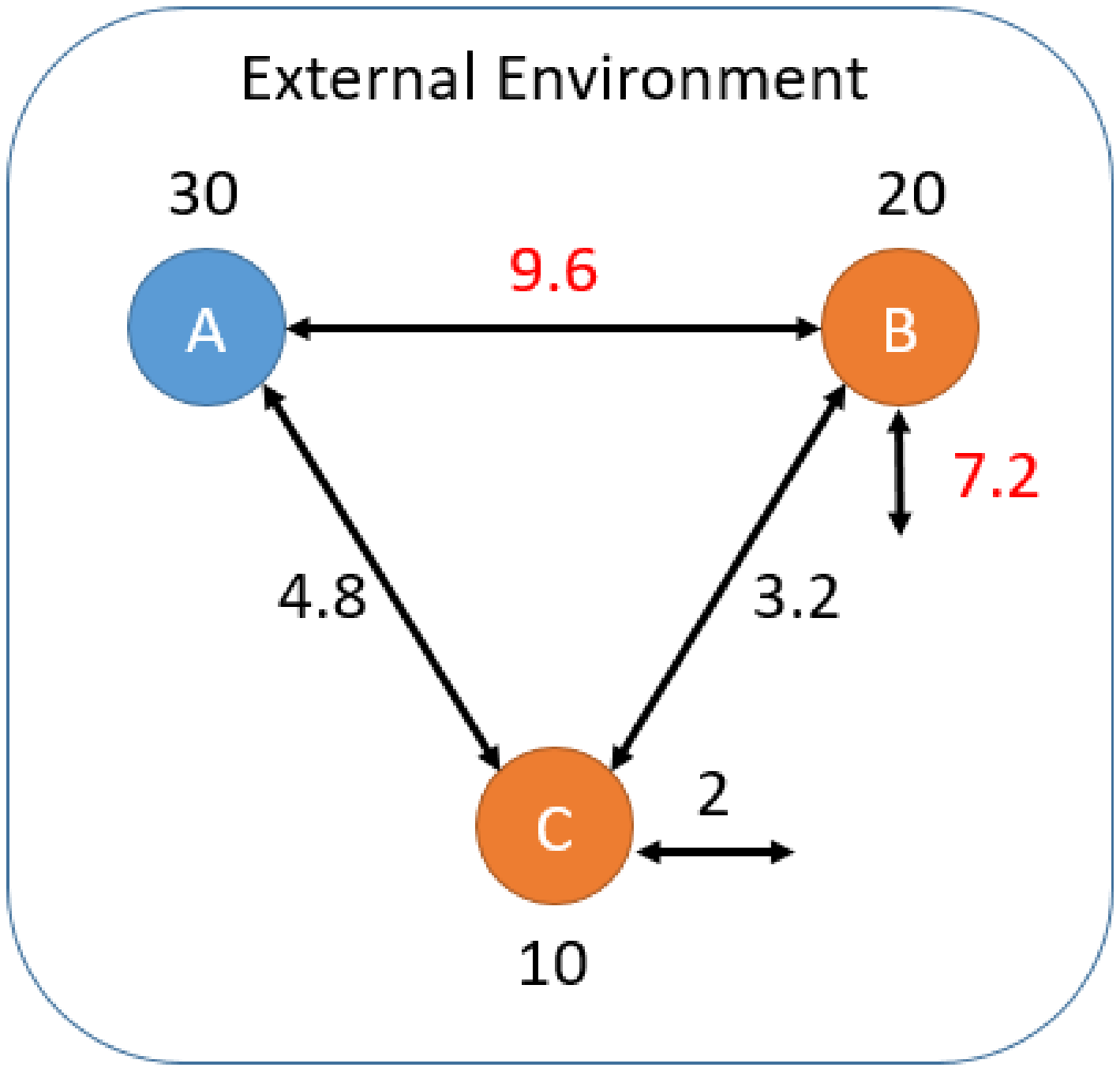}\label{pb}}
	\subfloat[\scriptsize]{\includegraphics[width=0.3\textwidth]{./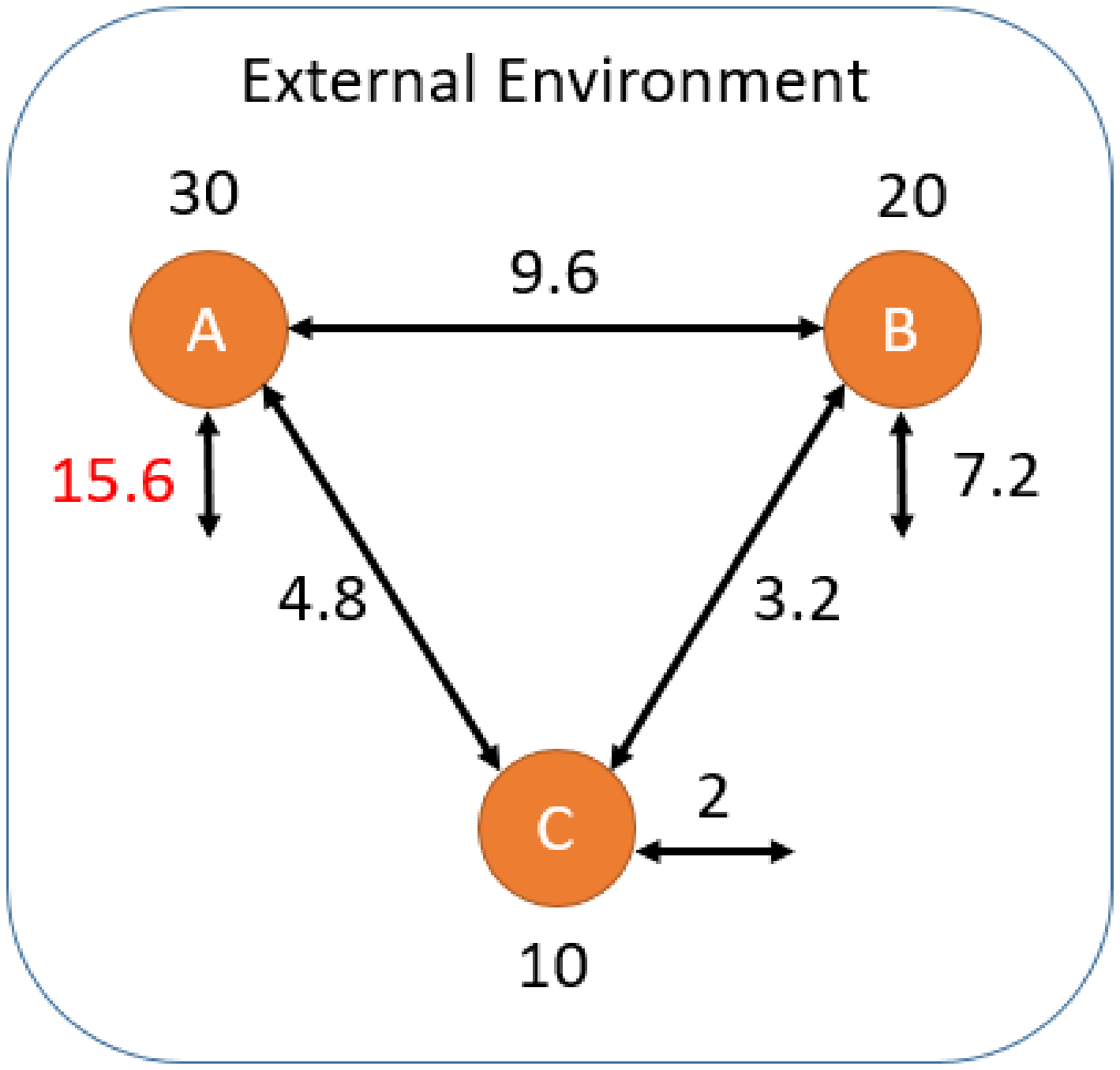}\label{pc}}
	
	\caption{\small An example of the mobile population rules.
	}
	\label{populationmobilityrule}
	\vspace{-0.025in}

\end{figure*}

\subsection*{Multi-city Model}

We extend the single-city model to multiple cities where different cities may impose different levels of prevention measures and there
	are population movement between the cities.
In the multi-city model, assume that there are $m$ cities.
For each city $i$, we use subscript $i$ in $S_i, I_i, R_i, N_i$ and $r_i$ to represent the SIR variables and parameters.
Note that $\beta$ and $\gamma$ are the same across cities since they model the intrinsic characteristics of the virus, while 
	the prevention parameter \{$r_i$\} are different across cities, modeling different prevention measures taken among the cities.

\noindent \textbf{Population balancing with the external environment}.
In the real world, the population base of a city is usually stable over time(inflows equals outflows).
Thus, we would like our model to maintain the population for city $i$, $N_i$, the same over time, which also simplifies our model
	analysis.
However, with different population mobility between cities,
	it is not trivial to maintain population balance in the multi-city model.
To achieve this, we introduce the external environment as the background that could counter balance the excess or deficit of population
	of each city due to the population movement.
 The external environment could be viewed as other cities that we would not model explicitly or the rural areas surrounding the cities.

As shown in Figure~\ref{populationmobilityrule}, suppose that each city $i$ has a proportion of the floating population $p_i$, then will has a floating population $fp_i$, which is the population moving in or out of
	the city in a time unit. 
Let $f_{i,j}$ be the population moving from city $i$ and city $j$ in a time unit, and $f_{i,0}$ be the population moving between the city $i$
	to the external environment in a time unit.
Thus, we have $fp_i = N_{i} * p_{i} = \sum\nolimits_{j=0,j \neq i}^{m} f_{i,j}$.

We use the following rules to govern the inter-city population movement:
(1) between two cities, the two directional movement population is balanced, i.e. $f_{i,j} = f_{j,i}$; and
(2) the population movements from one city $i$ to two different cities $j$ and $j'$ are proportional to the two destination cities'
	floating populations, i.e. $f_{i,j} / f_{i,j'} = fp_j / fp_{j'}$.
To realize the population movement, we use the following procedure:

\floatname{algorithm}{Procedure}
\begin{algorithm}
\caption{A round of population mobility between cities.}
\label{alg:c}
\hspace*{0.02in} {\bf Input:} \\
\hspace*{0.2in} $m$, the number of cities\\
\hspace*{0.2in} $N_i$, the population of city $i$\\
\hspace*{0.2in} $p_i$, the flow proportion of city $i$\\
\hspace*{0.2in} $p_{em}$, the minimum proportion of external floating population\\
\hspace*{0.02in} {\bf Output:} \\
\hspace*{0.2in} $f_{i,j}$, the floating population between city $i$ and city $j$
\begin{algorithmic}[1]
\State \textbf{function} Population\_Mobility 
\For{$i \gets 1:m$}
\State $fp_{i}=N_i \times p_i$
\State $fp_{sum}+=fp_{i}$
\EndFor
\State Ascending\_Sort$(fp_i)$
\State $f_{0,1}=p_{em} \times fp_1$
\For{$j \gets 2:m$}
\State $f_{1,j}=(1-p_{em}) \times fp_1 \times \frac{fp_j}{fp_{sum}-fp_i}$
\EndFor
\For{$i \gets 2:m-1$}
\For{$j \gets i+1:m$}
\State $f_{i,j}=f_{i-1,i} \times \frac{fp_j}{fp_{i-1}}$
\EndFor
\State $f_{0,i}=fp_i$
\For{$j \gets 1:m$}
\State $f_{0,i}-=f_{i,j}$
\EndFor
\EndFor
\State \Return $f_{i,j}$
\end{algorithmic}
\end{algorithm}

\begin{itemize}

\item   We first calculate the floating population $fp_i$ of each city and the total floating population $fp_{sum}$. Then we sort the floating population $fp_i$ from the smallest to the largest, and determine the population movement of each city in this order (without loss of generality, assuming $fp_1\le fp_2 \le \cdots$).
\item For the city $1$ with the smallest floating population, we first allocate a fixed fraction (e.g. $20\%$) of the floating population
	as the exchange population with the external environment to ensure that every city has population exchange with the environment, which is called the minimum proportion of external floating population $p_{em}$. The remaining floating population is allocated to $f_{1,j}$ 
	proportional to the city $j$'s floating population $fp_j$, according to rule (2) above.
\item Then for each remaining city $i$ in the order, we first apply rule (1) above to fix the $f_{i,j} = f_{j,i}$ for all $j<i$, and then
	apply rule (2) on the remaining cities $j' > i$ to get $f_{i,j'}$ proportional to $fp_{j'}$. Finally for the left over population
	$fp_i - \sum\nolimits_{j=1,j \neq i}^{m} f_{i,j}$, we assign it to the exchange population with the environment $f_{i,0}$.
     
\end{itemize}
We take city A, B and C as examples for a brief introduction below.


Figure~\ref{populationmobilityrule} shows a concrete example of the procedure.
\text{(a)} The floating population of city A, city B and city C is 30, 20 and 10 thousand respectively. First of all, the floating population between city C and other cities is calculated, with $20\%$ floating population (i.e. 2,000) exchanges with the environment, and
	the remaining $80\%$, i.e. 8,000 people are allocated among moving populations to city A and city B.
Based on the floating population proportion, movement to city B accounts for $40\%$, which is 3,200, and movement to city A is 4,800.
\text{(b)} Then we calculate the city B with the second smallest floating population. The floating population between city B and city C is 3,200, and the ratio of floating population between city C and city A is 1: 3. Therefore, the proportion of floating population between city B and city C and between city B and city A is also 1:3. The floating population between city B and city A is 9,600, and the remaining 7,200 are between city B and the external environment.
\text{(c)} Similarly, the remaining 15,600 floating population in city A will be used as the floating population with the external environment.

With the above rules of population mobility, the population base of each city remains unchanged, and the cities with more floating population account for a larger proportion of the floating population in other cities.

\noindent \textbf{Multi-city SIR model base on population mobility}.
We now combine the single-city SIR model with the mobile population rules defined in the previous section. 
With the rules of population mobility and the external environment, we can ensure the dynamic balance of population mobility in multiple cities in the model.
By adding the part of population mobility to the single-city model, we get the multi-city model with the following equations:
\begin{align*}
\frac{\mathrm{d}S_{i}}{\mathrm{d}t} & = -\frac{r_{i} \beta I_{i}S_{i}}{N_{i}}+\sum_{j=0,j \neq i}^m f_{i,j}(\frac{S_{j}}{N_{j}}-\frac{S_{i}}{N_{i}}),  \\
\frac{\mathrm{d}I_{i}}{\mathrm{d}t} & = \frac{r_{i} \beta I_{i}S_{i}}{N_{i}}- \gamma I_{i}+\sum_{j=0,j \neq i}^m f_{i,j}(\frac{I_{j}}{N_{j}}-\frac{I_{i}}{N_{i}}),  \\
\frac{\mathrm{d}R_{i}}{\mathrm{d}t} &= \gamma I_{i}+\sum_{j=0,j \neq i}^m f_{i,j}(\frac{R_{j}}{N_{j}}-\frac{R_{i}}{N_{i}}), 
\end{align*}
where $m$ for the total number of cities,
$0$ represents the external environment, and $1 \sim m$ represent the cities;
$f_{i,j}$ represents the floating population between city $i$ and city $j$.
For the change of population quantity under the multi-city model, besides the virus infection within the city given by 
	the term $\frac{r_{i} \beta I_{i}S_{i}}{N_{i}}$ same as in Equation\eqref{eq:single},
	we also account for movements between the cities: for every $j\ne i$, $f_{i,j} S_i/N_i$ is the number of 
	susceptible individuals moving out of
	city $i$ to city $j$, assuming that the mobile population has the same fraction of susceptible individuals as the general city population;
	and $f_{i,j} S_j/N_j$ is the number of susceptible individuals moving into city $i$ from city $j$.
The other equations have the similar explanation.

From the model, in particular the change of $I$, we can see that if city $j$ has a higher portion of infected people
	than city $i$, i.e., $I_j/N_j > I_i / N_i$, then city $j$ contributes the inflow of infected to city $i$, and vice versa.
This matches the intuition of virus propagates from highly infected cities to lowly infected cities.

}

\end{document}